\long\def\remove#1{}
\newcommand{\pbox}{\hbox to 6pt{\leaders\hrule width 6pt height 6pt\hfill}}
\begin{document}
	
	\mainmatter  % start of an individual contribution
	
	% first the title is needed
	\title{The Complexity and Expressive Power of Second-Order Extended Logic}
	
	% a short form should be given in case it is too long for the running head
	\titlerunning{The Complexity and Expressive Power of Second-Order Extended Logic}
	
	% the name(s) of the author(s) follow(s) next
	%
	% NB: Chinese authors should write their first names(s) in front of
	% their surnames. This ensures that the names appear correctly in
	% the running heads and the author index.
	%
	\author{Shiguang Feng \inst{1}
		\and Xishun Zhao \inst{1}
	}
	\authorrunning{S. Feng, X. Zhao}
	% (feature abused for this document to repeat the title also on left hand pages)
	
	% the affiliations are given next; don't give your e-mail address
	% unless you accept that it will be published
	\institute{Institute of Logic and Cognition, Department of Philosophy, \\Sun Yat-sen University, Guangzhou, 510275, China.
	}
	
	%
	% NB: a more complex sample for affiliations and the mapping to the
	% corresponding authors can be found in the file "llncs.dem"
	% (search for the string "\mainmatter" where a contribution starts).
	% "llncs.dem" accompanies the document class "llncs.cls".
	%
	
	%\toctitle{Lecture Notes in Computer Science}
	%\tocauthor{Authors' Instructions}
	\maketitle

\begin{abstract}
We study the expressive powers of SO-HORN$^{*}$, SO-HORN$^{r}$ and
SO-HORN$^{*r}$ on all finite structures. We show that SO-HORN$^{r}$,
SO-HORN$^{*r}$, FO(LFP) coincide with each other and SO-HORN$^{*}$
is proper sublogic of  SO-HORN$^{r}$.
To prove this result, we introduce the notions of DATALOG$^{*}$ program,  DATALOG$^{r}$ program and their stratified versions, S-DATALOG$^{*}$
program and S-DATALOG$^{r}$ program. It is shown that, on all structures,
DATALOG$^{r}$ and S-DATALOG$^{r}$ are equivalent and DATALOG$^{*}$
is a proper sublogic of DATALOG$^{r}$. SO-HORN$^{*}$ and SO-HORN$^{r}$
can be treated as the negations of DATALOG$^{*}$ and DATALOG$^{r}$,
respectively.  We also show that SO-EHORN$^{r}$ logic which is an extended version
of SO-HORN captures co-NP on all finite structures.
\end{abstract}

%-------------------------------------------------------------------------%
%    2.6 Body text
%-------------------------------------------------------------------------%
\section{Introduction}

Descriptive complexity is a bridge between complexity theory and mathematical
logic. It uses logic systems to measure the resources that are necessary
for some complexity classes. The first capture result is Fagin's theorem.
In 1974, Fagin showed that existential second-order logic($\exists$SO)
captures NP on all finite structures. This is a seminal work that
has been followed by many studies in the characterization of other
complexity classes by means of logical systems, the research of NP-complete
approximation problems from a descriptive point of view, and so on.
Through the effort of many researchers in this area, almost every
important complexity classes have their corresponding capture logic
systems. First-order logic (FO) which is a very powerful logic in classic
model theory is too weak when only consider finite structures. By augmenting
some recursion operators, researchers got many powerful logic. FO(IFP),
FO(PFP), FO(DTC) and FO(TC) are the logic that equip first-order logic
with inflationary fixpoint operator, partial fixpoint operator, deterministic
transitive closure operator and transitive closure operator, respectively.
Immerman and Vardi both showed that FO(IFP) captures PTIME on ordered
structures in 1986 and 1982, respectively. Abiteboul and Vainu showed
that FO(PFP) captures PSPACE on ordered structures in 1983. Immerman
showed that FO(DTC) and FO(TC) capture L and NL on ordered structures, respectively,
 in 1987. In\cite{gra91}, Gr\"{a}del introduced  SO-HORN logic
which is a fragment of second-order logic and showed that SO-HORN
captures PTIME on ordered structures. Whereas all these results are
obtained on ordered structures, a very important open problem  in
descriptive complexity is that whether there is an effective logic
captures PTIME on all structures. It turns out that if there is no
such a logic, then $P\neq NP$. Due to the definition of SO-HORN,
on all structures, it is closed under substructures and can not express
some first-order definable properties. Gr\"{a}del introduced another version
of this logic in \cite{Gradel} which is called SO-HORN$^{*}$ in this paper.
SO-HORN$^{*}$ allows first-order formula in it, but we show that it still
is not expressive enough to capture PTIME on all structures. In\cite{zhao-1},
we defined SO-HORN$^{r}$ logic which is an revised version of SO-HORN
and showed that it captures PTIME on ordered structures. We are interested
in finding that whether this logic can be a candidate to capture PTIME
on all finite structures. To study the expressive powers of these
logic on all finite structures, we introduce the notions of DATALOG$^{*}$ program and DATALOG$^{r}$
program and their stratified versions (S-DATALOG$^{*}$, S-DATALOG$^{r}$,
respectively). In this paper, we show that SO-HORN$^{r}$ has the same expressive
power as FO(LFP) which is a very powerful logic.  We also study the expressive power of
 SO-EHORN$^{r}$   and prove that SO-EHORN$^{r}$ captures co-NP on all finite structures.
This logic is an extended version of SO-HORN and has been shown that it can capture co-NP on ordered structures.

This paper is organized as follows: In section 2, we set up notations
and terminologies.  In section 3, we study the expressive powers
of SO-HORN$^{*}$, SO-HORN$^{r}$ and SO-HORN$^{*r}$ on all finite
structures. We use DATALOG  and S-DATALOG program as intermediate
logic, and introduce the notions of DATALOG$^{*}$ program,  DATALOG$^{r}$
program and their stratified versions. We show that SO-HORN$^{r}$,
SO-HORN$^{*r}$, FO(LFP) coincide with each other and SO-HORN$^{*}$
is a proper sublogic of SO-HORN$^{r}$. In
section 4, we use the normal forms of $\Sigma_{1}^{1}$ formulas to
show that SO-EHORN$^{r}$ captures co-NP on all finite structures.
We prove this result in two versions. Section 5 is the conclusion
of this paper.

\section{Preliminaries}

We assume the reader has already some familiarity with mathematical
logic and complexity theory. Nonetheless, we give a brief description
of some basic definitions.

A vocabulary is a finite set of relation symbols $P,Q,R,\cdots$ and
constant symbols $\mathbf{c},\mathbf{d},\cdots$. Each relation symbol
is equipped with a natural number $r\geq1$, its arity. We use lower-case
Greek letters $\tau,\sigma,\cdots$ to denote a vocabulary. Given
a vocabulary $\tau=\{P_{1},P_{2},\cdots,P_{n},\mathbf{c}_{1},\mathbf{c}_{2},\cdots,\mathbf{c}_{m}\}$,
a $\tau$-structure $\mathcal{A}$ is a tuple
\[
\left\langle A,P_{1}^{A},P_{2}^{A},\cdots,P_{n}^{A},\mathbf{c}_{1}^{A},\mathbf{c}_{2}^{A},\cdots,\mathbf{c}_{m}^{A}\right\rangle
\]
where $A$ is the domain of $\mathcal{A}$, for each relation symbol
$P_{i}$ of arity $r_{i}\,(1\leq i\leq n)$, $P_{i}^{A}\subseteq A^{r_{i}}$
and for each constant symbol $\mathbf{c}_{j}\,(1\leq j\leq m)$, $\mathbf{c}_{j}^{A}\in A$.
From now on, we use calligraphic letters $\mathcal{A},\mathcal{B},\cdots$
and capital letters $A,B,\dots$ to denote structures and the corresponding
domains, respectively. We use $|A|$ to denote the cardinality of
set $A$ and $\bar{a}$ to denote a sequence $a_{1},a_{2},\cdots,a_{n}$
or a tuple $(a_{1},a_{2},\cdots,a_{n})$ of elements (variables).
A structure $\mathcal{A}$ is finite if its domain $A$ is a finite
set. Without otherwise specified, we restrict ourself to finite structures
in this paper. Let $STRUC(\tau)$ denote the set of all finite $\tau$-structures,
and if $\mathcal{L}$ is a logic, then $\mathcal{L}(\tau)$ denotes
the set of all formulas of $\mathcal{L}$ over the vocabulary $\tau$.
Given a formula $\phi\in\mathcal{L}(\tau)$, define
\[
Mod(\phi)=\{\mathcal{A}\mid\mathcal{A}\textrm{ is a \ensuremath{\tau}-structure and }\mathcal{A}\models\phi\}
\]
Let $\mathcal{L}_{1}$ and $\mathcal{L}_{2}$ be two logic, we use
$\mathcal{L}_{1}\subseteq\mathcal{L}_{2}$ to denote the expressive
power of $\mathcal{L}_{2}$ is no less than that of $\mathcal{L}_{1}$,
i.e. for any vocabulary $\tau$ and any $\phi\in\mathcal{L}_{1}(\tau)$,
there exists $\psi\in\mathcal{L}_{2}(\tau)$ such that $Mod(\phi)=Mod(\psi)$.
$\mathcal{L}_{1}\subset\mathcal{L}_{2}$ denotes that $\mathcal{L}_{2}$
is strictly more expressive than $\mathcal{L}_{1}$, and $\mathcal{L}_{1}\equiv\mathcal{L}_{2}$
denotes that $\mathcal{L}_{1}$ has the same expressive power as $\mathcal{L}_{2}$.

Define $\tau_{<}=\tau\cup\{<,succ,\mathbf{min},\mathbf{max}\}$, where
$\tau$ is a vocabulary. A $\tau_{<}$-structure $\mathcal{A}$ is
ordered if the reduct $\mathcal{A}\mid\{<,succ,\mathbf{min},\mathbf{max}\}$
is an ordering (that is, $<$, $succ$, $\mathbf{min}$ and $\mathbf{max}$
are interpreted by the ordering relation, successor relation, the
least and the last element of the ordering, respectively.) Let $STRUC_{<}(\tau)$
denote the set of all ordered structures over $\tau_{<}$. Given a
formula $\phi\in\mathcal{L}(\tau_{<})$, define
\[
Mod_{<}(\phi)=\{\mathcal{A}\models\phi\mid\mathcal{A}\text{ is an ordered }\tau_{<}\text{-structure}\}
\]

Let $\mathcal{L}$ be a logic and $\mathcal{C}$ be a complexity class,
we say logic $\mathcal{L}$ captures complexity class $\mathcal{C}$
if the following two conditions are satisfied:
\begin{description}
\item [{1)}] The data complexity of $\mathcal{L}$ is in $\mathcal{C}$.

\begin{itemize}
\item For any vocabulary $\tau$ and any closed formula $\phi\in\mathcal{L}(\tau)$,
the membership problem of $Mod(\phi)$ is in $\mathcal{C}$.
\end{itemize}
\item [{2)}] $\mathcal{C}$ is expressible in $\mathcal{L}$.

\begin{itemize}
\item For any vocabulary $\tau$ and any set $K\subseteq STRUC(\tau)$,
if the membership problem of $K$ is in $\mathcal{C}$, then $\exists\phi\in\mathcal{L}(\tau)$
such that $K=Mod(\phi)$.
\end{itemize}
\end{description}
We say $\mathcal{L}$ captures $\mathcal{C}$ on ordered structures
if we only consider ordered structures and it satisfies:
\begin{description}
\item [{1)}] For any vocabulary $\tau$ and any closed formula $\phi\in\mathcal{L}(\tau_{<})$,
the membership problem of $Mod_{<}(\phi)$ is in $\mathcal{C}$.
\item [{2)}] For any vocabulary $\tau$ and any set $K\subseteq STRUC_{<}(\tau)$,
if the membership problem of $K$ is in $\mathcal{C}$, then $\exists\phi\in\mathcal{L}(\tau_{<})$
such that $K=Mod_{<}(\phi)$.
\end{description}
We recall some notions and results of quantified Boolean formulas
(QBF). A QBF formula has the form:
\[
\Phi=Q_{1}x_{1}\cdots Q_{n}x_{n}\varphi
\]
 where $Q_{i}\in\{\forall,\exists\}$ and $\varphi$ is a propositional
CNF formula over Boolean variables. A literal $x_{i}$ or $\neg x_{i}$
is called universal (resp. existential) if $Q_{i}$ is $\forall$
(resp. $\exists$). If every clause in $\varphi$ contains at most
one positive literal (resp. positive existential literal) then $\Phi$
is called a quantified Horn formula (QHORN) (resp. quantified extended
Horn formula (QEHORN)). The evaluation problem for QBF is PSPACE-complete
\cite{kble}, for QHORN is in PTIME, whereas it remains PSPACE-complete
for QEHORN. However, for each fixed $k\geq1$, the evaluation problem
for QEHORN formulas with prefix type $\forall\overline{x}_{1}\exists\overline{y}_{1}\cdots\forall\overline{x}_{k}\exists\overline{y}_{k}$
(here $\overline{x}_{i},\overline{y}_{i}$ are sequences of Boolean
variables) is co-NP complete\cite{fl90,fl93}.

\section{Expressive powers of Horn logic}
\begin{definition}
\label{def:SO-HORN}Second-order Horn logic, denoted by SO-HORN, is
the set of second-order formulas of the form
\[
Q_{1}R_{1}\cdots Q_{m}R_{m}\forall\bar{x}(C_{1}\wedge\cdots\wedge C_{n})
\]
 where $Q_{i}\in\{\forall,\exists\}$, $R_{1},\cdots,R_{m}$ are relation
symbols and $C_{1},\cdots,C_{n}$ are Horn clauses with respect to
$R_{1},\cdots,R_{m}$, more precisely, each $C_{j}$ is an implication
of the form
\[
\alpha_{1}\wedge\cdots\wedge\alpha_{l}\wedge\beta_{1}\wedge\cdots\wedge\beta_{q}\rightarrow H
\]
 where
\begin{description}
\item [{1)}] each $\alpha_{s}$ is an atomic formula $R_{i}\bar{x}$,
\item [{2)}] each $\beta_{t}$ is either an atomic formula $P\bar{y}$
or a negated atomic formula $\neg P\bar{y}$ where $P\notin\{R_{1},\cdots,R_{m}\}$,
\item [{3)}] $H$ is either an atomic formula $R_{k}\overline{z}$ or the
Boolean constant $\bot$ (for false). \end{description}
\begin{itemize}
\item If we replace condition \textbf{2)} by

\begin{description}
\item [{2')}] each $\beta_{t}$ is a first-order formula $\phi(\bar{y})$
containing no $R_{1},\cdots,R_{m}$,
\end{description}

we denote this logic by SO-HORN$^{*}$.

\item If we replace condition \textbf{1)} by

\begin{description}
\item [{1')}] each $\alpha_{s}$ is either an atomic formula $R_{i}\bar{x}$
or $\forall\bar{y}R_{i}\bar{y}\bar{z}$,
\end{description}

we call this logic second-order revised Horn logic, denoted by SO-HORN$^{r}$.

\item If we replace conditions \textbf{1),} \textbf{2)} by \textbf{1')},
\textbf{2')}, respectively, we denote this logic by SO-HORN$^{*r}$.
\end{itemize}
\end{definition}
On ordered structures, Gr\"{a}del\cite{gra91,Gradel} showed that SO-HORN
and SO-HORN$^{*}$ capture PTIME and we\cite{zhao-1} showed that
SO-HORN$^{*r}$ captures PTIME. These four logic coincide with each
other. On all structures, it turns out that
SO-HORN is strictly less expressive than FO(LFP) which is a very powerful
logic. It is not hard to see that the following  holds from the
definition above:
\[
\begin{array}{cccc}
 & \subseteq & \textrm{SO-HORN}^{\mathrm{*}}\\
\textrm{SO-HORN} &  &  & \subseteq\textrm{SO-HORN}^{\mathrm{*r}}\\
 & \subseteq & \textrm{SO-HORN}^{\mathrm{r}}
\end{array}
\]

An important open problem in descriptive complexity is that whether
there is an effective logic captures PTIME on all structures. We are
interesting in finding the expressive powers of these Horn logic on all
structures. At first, we give a lemma that is very useful. Let $\phi$
be a logic formula, we use $\phi[\alpha/\beta]$ to denote the formula
obtained by replacing the formula $\alpha$  by $\beta$ in $\phi$.
\begin{lemma}
\label{lemma:UviFOafterExisSO}Every second-order formula $\forall x\exists R\phi(x,\bar{x})$
is equivalent to a formula of the form $\exists R'\forall x\phi[R\bar{y}_{1}/R'x\bar{y}_{1},\cdots,R\bar{y}_{n}/R'x\bar{y}_{n}](x,\bar{x})$
where $x$ occurs free in $\phi$, $R$ is an $r$-ary relation symbol,
$R'$ is an $r+1$-ary relation symbol and $R\bar{y}_{1},\cdots,R\bar{y}_{n}$
are all the different atomic formulas that occur in $\phi$ and over
the relation symbol $R$.\end{lemma}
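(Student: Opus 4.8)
The plan is to prove both directions of the equivalence by exhibiting a correspondence between a single $(r+1)$-ary relation and an $A$-indexed family of $r$-ary relations, which is exactly what the extra argument of $R'$ is meant to encode. Fix a finite structure $\mathcal{A}$ together with an interpretation of the free variables $\bar{x}$. For an $(r+1)$-ary relation $S\subseteq A^{r+1}$ and an element $a\in A$, write $S_{a}=\{\bar{b}\in A^{r}\mid(a,\bar{b})\in S\}$ for the $a$-th \emph{slice} of $S$. The map $S\mapsto(S_{a})_{a\in A}$ is a bijection between $(r+1)$-ary relations on $A$ and families of $r$-ary relations indexed by $A$; this bijection is the heart of the argument, since it lets a single relation $R'$ carry one witness $R$ for each value of the quantified variable $x$.

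First I would establish the following substitution claim: for every $a\in A$ and every $(r+1)$-ary relation $S$, interpreting $R'$ as $S$ and $x$ as $a$ makes $\phi[R\bar{y}_{1}/R'x\bar{y}_{1},\cdots,R\bar{y}_{n}/R'x\bar{y}_{n}]$ true if and only if interpreting $R$ as the slice $S_{a}$ and $x$ as $a$ makes $\phi$ true. This is proved by a routine induction on the structure of $\phi$. The only nontrivial base case is an atomic formula $R\bar{y}_{j}$: under the left-hand interpretation its replacement $R'x\bar{y}_{j}$ holds iff $(a,\bar{y}_{j})\in S$, i.e. iff $\bar{y}_{j}\in S_{a}$, which is exactly the condition for $R\bar{y}_{j}$ to hold under the right-hand interpretation. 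All other connectives and quantifiers are handled transparently because the substitution touches only the atoms over $R$.

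Granting the claim, both implications fall out. For the forward direction, assume $\mathcal{A}\models\forall x\exists R\,\phi$; for each $a\in A$ choose a witness $R_{a}$, and let $R'$ be the unique relation whose slices are the $R_{a}$. Then for every $a$ the substitution claim gives $\mathcal{A}\models\phi[\cdots](a,\bar{x})$, so $\mathcal{A}\models\exists R'\forall x\,\phi[\cdots]$. Conversely, given such an $R'$, its slice $R'_{a}$ witnesses $\exists R$ at each $a$ by the same claim, yielding $\forall x\exists R\,\phi$. I expect the main (though still modest) obstacle to be bookkeeping in the substitution claim: one must check that introducing the free variable $x$ into each replaced atom causes no variable capture. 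This is safe precisely because $x$ occurs free throughout $\phi$ and hence is bound by no quantifier inside it, while the tuples $\bar{y}_{j}$ are left unchanged; listing $R\bar{y}_{1},\dots,R\bar{y}_{n}$ as \emph{all} the distinct $R$-atoms of $\phi$ guarantees that $R$ is eliminated completely and only $R'$ survives.
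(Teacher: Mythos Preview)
Your proposal is correct and follows essentially the same strategy as the paper: both arguments rest on the bijection between $(r{+}1)$-ary relations and $A$-indexed families of $r$-ary relations via slicing, establish the key substitution equivalence $(\mathcal{A},R_a)\models\phi[a,\bar b]\Leftrightarrow(\mathcal{A},R')\models\phi'[a,\bar b]$, and then read off both directions. You are somewhat more explicit than the paper in spelling out the induction for the substitution claim and in addressing variable capture, but there is no substantive difference in approach.
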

\begin{proof}
Given a structure $\mathcal{A}$ and $R'\subseteq A^{r+1}$, write
$$R_{a}=\{(a_{1},\cdots,a_{r})\mid(a,a_{1},\cdots,a_{r})\in R'\}$$
then for any $(a_{1},\cdots,a_{r})\in A^{r}$
\begin{equation}
\begin{array}{ccc}
(a_{1},\cdots,a_{r})\in R_{a} & \textrm{iff} & (a,a_{1},\cdots,a_{r})\in R'\end{array}\label{eq:1}
\end{equation}
Set $\phi'(x,\bar{x})=\phi[R\bar{y}_{1}/R'x\bar{y}_{1},\cdots,R\bar{y}_{n}/R'x\bar{y}_{n}](x,\bar{x})$.
Suppose $\bar{x}=x_{1},\cdots,x_{k}$, for any $\bar{b}=b_{1},\cdots,b_{k}\in A$,
$R'\subseteq A^{r+1}$ and $a\in A$, by \eqref{eq:1} it is easy
to check that
\begin{equation}
\begin{array}{ccc}
(\mathcal{A},R_{a})\models\phi[a,\bar{b}] & \textrm{iff} & (\mathcal{A},R')\models\phi'[a,\bar{b}]\end{array}\label{eq:all_before_exist}
\end{equation}
For any structure $\mathcal{A}$ and $\bar{b}=b_{1},\cdots,b_{k}$,

\[
\begin{array}{lll}
\mathcal{A}\models\forall x\exists R\phi[\bar{b}] & \textrm{iff} & \textrm{for any \ensuremath{a\in A}there exists \ensuremath{\textrm{ }R_{a}\subseteq A^{r}}}\\
 &  & \textrm{such that \ensuremath{(\mathcal{A},R_{a})\models\phi[a,\bar{b}]}}\\
\\
 & \textrm{iff} & (\mathcal{A},R')\models\phi'[a,\bar{b}]\textrm{ for any \ensuremath{a\in A}\,(by \eqref{eq:all_before_exist}, where }\textrm{ }\\
 &  & R'=\underset{a\in A}{\bigcup}\{(a,a_{1},\cdots,a_{r})\mid(a_{1},\cdots,a_{r})\in R_{a}\}\textrm{ )}\\
\\
 & \textrm{iff} & (\mathcal{A},R')\models\forall x\phi'[\bar{b}]\\
\\
 & \textrm{iff} & \mathcal{A}\models\exists R'\forall x\phi'[\bar{b}]
\end{array}
\]
which completes the proof.
\end{proof}
It turns out that SO-HORN collapses to its existential fragment ((SO$\exists$)-HORN,
i.e. the formulas where all the second-order quantifiers are existential\cite{gra91}).
In fact, many second-order formulas equal to the formulas that only
have existential second-order prefix. We extend this result to a more
general form in the following proposition.
\begin{proposition}
Every second-order formula of the form
\begin{equation}
Q_{1}R_{1}\cdots Q_{m}R_{m}\forall\bar{x}\left(\overset{l}{\underset{j=1}{\bigwedge}}(\alpha_{j1}\wedge\cdots\wedge\alpha_{jh_{j}}\wedge\beta_{j1}\wedge\cdots\wedge\beta_{jq_{j}}\rightarrow H_{j})\right)\label{eq:eli_Uni}
\end{equation}
 where $Q_{i}\in\{\forall,\exists\}$ and
\begin{description}
\item [{1)}] each $\alpha_{kt}$ is either an atomic formula $R_{i}\bar{x}$
or $\forall\bar{y}R_{i}\bar{y}\bar{z}$,
\item [{2)}] each $\beta_{ef}$ is a second-order formula that does not
contain $R_{1},\cdots,R_{m}$,
\item [{3)}] $H_{j}$ is either an atomic formula $R_{k}\overline{z}$
or the Boolean constant $\bot$ (for false),
\end{description}
is equivalent to a formula of the form
\begin{equation}
\exists R'_{1}\cdots\exists R'_{n}\forall\bar{x}'\left(\overset{l'}{\underset{j=1}{\bigwedge}}(\alpha'_{j1}\wedge\cdots\wedge\alpha'_{jh'_{j}}\wedge\beta'_{j1}\wedge\cdots\wedge\beta'_{jq'_{j}}\rightarrow H'_{j})\right)\label{eq:eli_Uni_2}
\end{equation}
where
\begin{description}
\item [{1)}] each $\alpha'_{kt}$ is either an atomic formula $R'_{i}\bar{x}$
or $\forall\bar{y}R'_{i}\bar{y}\bar{z}$,
\item [{2)}] each $\beta'_{e'f'}$ is some $\beta_{ef}$ in \eqref{eq:eli_Uni}
that does not contain $R_{1},\cdots,R_{m},R'_{1},\cdots,R'_{n}$,
\item [{3)}] $H'_{j}$ is either an atomic formula $R'_{k}\overline{z}$,
or the Boolean constant $\bot$ (for false).
\end{description}
\end{proposition}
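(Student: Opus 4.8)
The plan is to argue by induction on the number $k$ of universally quantified second-order variables occurring in the prefix $Q_1R_1\cdots Q_mR_m$ of \eqref{eq:eli_Uni}. For $k=0$ the formula already has the shape \eqref{eq:eli_Uni_2} after renaming, so the whole content lies in the inductive step: removing one universal second-order quantifier while keeping the matrix a conjunction of revised-Horn clauses (conditions 1'--3') and without manufacturing new side formulas, so that the surviving $\beta'_{e'f'}$ are genuinely among the original $\beta_{ef}$. I would isolate the innermost universal quantifier, writing the prefix as $Q_1R_1\cdots Q_{i-1}R_{i-1}\,\forall R_i\,\exists R_{i+1}\cdots\exists R_m$, so that every second-order quantifier inside the scope of $\forall R_i$ is existential.

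The engine of the elimination is the positive, one-clause-at-a-time nature of falsification in Horn logic: $R_i$ is never negated, occurring only through the body atoms $R_i\bar x$ and $\forall\bar y R_i\bar y\bar z$ and through a possible head $R_i\bar z$. Suppose first that $\forall R_i$ already sits directly in front of the matrix. Then I would pass to the negation, write $\exists R_i\,\neg C=\exists R_i\bigvee_j(\mathrm{Body}_j\wedge\neg H_j)$, and distribute the existential over the disjunction. For a clause whose head $H_j$ does not mention $R_i$, the body's positive $R_i$-atoms can always be satisfied by enlarging $R_i$, so $\exists R_i(\mathrm{Body}_j\wedge\neg H_j)$ reduces to deleting those atoms from $\mathrm{Body}_j$. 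For a clause with $H_j=R_i\bar z$, one may keep $\bar z$ out of $R_i$ while satisfying the remaining $R_i$-requirements exactly when $\bar z$ is not forced into $R_i$ by them; this is a Boolean combination of equalities comparing $\bar z$ with the body's $R_i$-tuples and with the tails $\bar z'$ of its universal atoms $\forall\bar y R_i\bar y\bar z'$, hence a built-in first-order condition carrying no new vocabulary. Re-dualizing, $\forall R_i\forall\bar x\,C$ turns into a conjunction of clauses from which the $R_i$-atoms have been erased and into which at most such equality literals have been inserted; I would then check that each clause is still Horn in $R_{i+1},\dots,R_m$ (single positive head, surviving antecedent atoms of the form $R_j\bar x$ or $\forall\bar y R_j\bar y\bar z$, untouched $\beta$'s), so conditions 1'--3' persist.

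The main obstacle is that this clean computation requires $\forall R_i$ to be adjacent to the first-order matrix, whereas in general the block $\exists R_{i+1}\cdots\exists R_m$ intervenes, and one cannot interchange a universal with an inner \emph{second-order} existential --- the arity-raising device of Lemma~\ref{lemma:UviFOafterExisSO} only commutes a \emph{first-order} universal with a second-order existential. To bridge this gap I plan to exploit the same locality: since falsifying any single clause inspects only boundedly many $R_i$-atoms, the quantifier $\forall R_i$ can be demoted to a block $\forall\bar w$ of ordinary first-order variables ranging over the critical tuples and slice-parameters identified above. Once $\forall R_i$ has become $\forall\bar w$, I would apply Lemma~\ref{lemma:UviFOafterExisSO} repeatedly to commute each variable of $\bar w$ past $\exists R_{i+1},\dots,\exists R_m$, every application absorbing the variable as a fresh argument placed \emph{after} the universally quantified block of the affected relation so that an atom $\forall\bar y R_j\bar y\bar z$ becomes $\forall\bar y R'_j\bar y\bar z w$ and retains the required shape, while the $R_i$-free formulas $\beta_{ef}$ stay fixed. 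After $\bar w$ has been absorbed into $\forall\bar x$, the prefix has one fewer universal second-order quantifier and the induction hypothesis applies.

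I expect the two delicate points to be (a) justifying the demotion of $\forall R_i$ to first-order quantification in the presence of the universal atoms $\forall\bar y R_i\bar y\bar z$, whose falsification pins down an entire slice of $R_i$ and so contributes its own first-order universal subformula, and (b) verifying at every rewriting step that the matrix remains a conjunction of revised-Horn clauses rather than an arbitrary Boolean combination, since only then do the bookkeeping conditions 1'--3' on the target formula \eqref{eq:eli_Uni_2} survive.
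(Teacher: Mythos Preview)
Your inductive scaffolding---strip the innermost universal second-order quantifier, then use Lemma~\ref{lemma:UviFOafterExisSO} to commute any resulting first-order universals past the remaining $\exists R_j$---is exactly the paper's. The gap is in the step you yourself flag as the main obstacle: how to demote $\forall R_i$ to a first-order block $\forall\bar w$ when existentials $\exists R_{i+1}\cdots\exists R_m$ intervene.

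Your adjacent-case calculation (dualize, distribute $\exists R_i$ over the disjunction, maximize $R_i$) is fine as far as it goes, but your proposed bridge to the general case---``locality of clause falsification''---does not carry the weight you need. The witnesses $R_{i+1},\dots,R_m$ are chosen as a function of the \emph{entire} relation $R_i$, not of the handful of $R_i$-atoms read by a single falsified clause, so boundedness of that read-set says nothing about how those witnesses vary with $R_i$. You acknowledge this as delicate point~(a) but supply no mechanism.

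The paper's mechanism is an intersection-of-witnesses argument that uses the Horn shape directly and bypasses your clause-by-clause dualization entirely. One proves that $\forall P\,\exists\bar R\,\forall\bar x\,\phi$ holds iff it holds for every $P$ that is false on \emph{at most one} tuple: given witnesses $R_j^{P_{\bar a}}$ for each one-hole relation $P_{\bar a}$ with $\bar a\notin P$, set $R_j^{P}:=\bigcap_{\bar a\notin P}R_j^{P_{\bar a}}$. If some clause were violated under $(P,R^{P})$, then monotonicity of the body atoms $R_j\bar u$ and $\forall\bar y\,R_j\bar y\bar z$ (and of $P$ itself, since $P=\bigcap_{\bar a\notin P}P_{\bar a}$) pushes the body up to every $(P_{\bar a},R^{P_{\bar a}})$, while the single atomic head, being false in the intersection, must already be false in one of them---contradicting the choice of those witnesses. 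Once this is established, $\forall P$ is replaced literally by a first-order $\forall\bar y$ over the missing tuple via the substitution $P\bar z\mapsto(\bar z\neq\bar y)$ (together with the full-$P$ instance $P\bar z\mapsto(\bar z=\bar z)$), and repeated application of Lemma~\ref{lemma:UviFOafterExisSO} finishes. This intersection step is the idea your plan is missing; the adjacent-case analysis is a detour that does not lead to it.
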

\begin{proof}
This proof is adapted from that of Theorem 5 in \cite{gra91}. It
is suffice to prove for formulas of the form $\forall P\exists R_{1}\cdots\exists R_{n}\forall\bar{x}\phi$.
For an arbitrary second-order prefix we can successively remove the
innermost universal second-order quantifier. The main idea is that for
any structure $\mathcal{A}$
\[
\begin{array}{ccl}
\mathcal{A}\models\forall P\exists R_{1}\cdots\exists R_{n}\forall\bar{x}\phi & \textrm{iff} & (\mathcal{A},P)\models\exists R_{1}\cdots\exists R_{n}\forall\bar{x}\phi\textrm{ for any }P\\
 &  & \textrm{that are false at at most one point.}
\end{array}
\]
Suppose that the arity of $P$ is $k$ and $\bar{y}=y_{1},\cdots,y_{k}$
are new variables that don't occur in $\phi$, we only need to show
that
\[
\begin{array}{ccr}
\mathcal{A}\models\forall P\exists R_{1}\cdots\exists R_{n}\forall\bar{x}\phi & \textrm{iff} & \mathcal{A}\models\forall\bar{y}\exists R{}_{1}\cdots\exists R{}_{n}\forall\bar{x}\phi[P\bar{z}/\bar{z}\neq\bar{y}]\\
 &  & \bigwedge\exists R_{1}\cdots\exists R_{n}\forall\bar{x}\phi[P\bar{z}/\bar{z}=\bar{z}]
\end{array}
\]

The $"\Rightarrow"$ direction is easy. For the $"\Leftarrow"$ direction,
suppose
\[
\mathcal{A}\models\forall\bar{y}\exists R{}_{1}\cdots\exists R{}_{n}\forall\bar{x}\phi[P\bar{z}/\bar{z}\neq\bar{y}]\wedge\exists R_{1}\cdots\exists R_{n}\forall\bar{x}\phi[P\bar{z}/\bar{z}=\bar{z}]
\]
The case that $P=A^{k}$ or $P$ is false at only one tuple is trivial.
Let $P_{\bar{a}}$ be the subset of $A^{k}$ that only the tuple $\bar{a}$
is not in $P$, we use $R_{1}^{P_{\bar{a}}},\cdots,R_{n}^{P_{\bar{a}}}$
to denote the corresponding relations such that $(\mathcal{A},P_{\bar{a}},R_{1}^{P_{\bar{a}}},\cdots,R_{n}^{P_{\bar{a}}})\models\forall\bar{x}\phi$.
For the other cases, consider an arbitrary $P\subset A^{k}$, set
$R_{i}^{P}=\underset{\bar{a}\notin P}{\bigcap}R_{i}^{P_{\bar{a}}}\,(1\leq i\leq n)$.
We proceed to show that
\[
(\mathcal{A},P,R_{1}^{P},\cdots,R_{n}^{P})\models\forall\bar{x}\phi
\]
Conversely, suppose that
\[
(\mathcal{A},P,R_{1}^{P},\cdots,R_{n}^{P})\nvDash\forall\bar{x}\phi
\]
 then there exist $\bar{b}\in A$ and a clause $\alpha_{1}\wedge\cdots\wedge\alpha_{h}\wedge\beta_{1}\wedge\cdots\wedge\beta_{q}\rightarrow H_{j}$
such that
\begin{equation}
(\mathcal{A},P,R_{1}^{P},\cdots,R_{n}^{P})\models\alpha_{1}\wedge\cdots\wedge\alpha_{h}\wedge\beta_{1}\wedge\cdots\wedge\beta_{q}[\bar{b}]\label{eq:eli_Uni_3}
\end{equation}
and

\begin{equation}
(\mathcal{A},P,R_{1}^{P},\cdots,R_{n}^{P})\nvDash H_{j}[\bar{b}]\label{eq:eli_Uni_4}
\end{equation}
Since $P=\underset{\bar{a}\notin P}{\bigcap}P_{\bar{a}}$ and $R_{i}^{P}=\underset{\bar{a}\notin P}{\bigcap}R_{i}^{P_{\bar{a}}}\,(1\leq i\leq n)$,
it follows that for each $\bar{a}\notin P$ and $Q\in\{P,R_{1},\cdots,R_{n}\}$,
\[
(\mathcal{A},P,R_{1}^{P},\cdots,R_{n}^{P})\models Q\bar{c}\Rightarrow(\mathcal{A},P_{\bar{a}},R_{1}^{P_{\bar{a}}},\cdots,R_{n}^{P_{\bar{a}}})\models Q\bar{c}
\]
and
\[
(\mathcal{A},P,R_{1}^{P},\cdots,R_{n}^{P})\models\forall\bar{y}Q\bar{y}\bar{c}'\Rightarrow(\mathcal{A},P_{\bar{a}},R_{1}^{P_{\bar{a}}},\cdots,R_{n}^{P_{\bar{a}}})\models\forall\bar{y}Q\bar{y}\bar{c}'
\]
where $\bar{c}$ and $\bar{c}'$ tuples of elements of $A$. Because
$P,R_{1},\cdots,R_{n}$ don't occur in $\beta_{1},\cdots,\beta_{q}$,
by \eqref{eq:eli_Uni_3} we see that for each $\bar{a}\notin P$
\[
(\mathcal{A},P_{\bar{a}},R_{1}^{P_{\bar{a}}},\cdots,R_{n}^{P_{\bar{a}}})\models\alpha_{1}\wedge\cdots\wedge\alpha_{h}\wedge\beta_{1}\wedge\cdots\wedge\beta_{q}[\bar{b}]
\]
\begin{itemize}
\item If $H_{j}$ is $\bot$ in \eqref{eq:eli_Uni_4}, then $(\mathcal{A},P_{\bar{a}},R_{1}^{P_{\bar{a}}},\cdots,R_{n}^{P_{\bar{a}}})\nvDash\bot$,
contrary to $$(\mathcal{A},P_{\bar{a}},R_{1}^{P_{\bar{a}}},\cdots,R_{n}^{P_{\bar{a}}})\models\forall\bar{x}\phi$$
\item If $H_{j}$ is some $Q\bar{z}$ where $Q\in\{P,R_{1},\cdots,R_{n}\}$,
there must be a $\bar{a}\notin P$ such that $(\mathcal{A},P_{\bar{a}},R_{1}^{P_{\bar{a}}},\cdots,R_{n}^{P_{\bar{a}}})\nvDash H_{j}[\bar{b}]$,
contrary to $$(\mathcal{A},P_{\bar{a}},R_{1}^{P_{\bar{a}}},\cdots,R_{n}^{P_{\bar{a}}})\models\forall\bar{x}\phi$$
\end{itemize}
By repeating use of lemma \ref{lemma:UviFOafterExisSO}, $\forall\bar{y}\exists R{}_{1}\cdots\exists R{}_{n}\forall\bar{x}\phi[P\bar{z}/\bar{z}\neq\bar{y}]\wedge\exists R_{1}\cdots\exists R_{n}\forall\bar{x}\phi[P\bar{z}/\bar{z}=\bar{z}]$
is equivalent to a formula of the required form.\end{proof}
\begin{corollary}
SO-HORN$^{r}$ and SO-HORN$^{*r}$ are both collapse to their existential
fragments.
\end{corollary}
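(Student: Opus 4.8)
The plan is to show that every SO-HORN$^{r}$ and every SO-HORN$^{*r}$ formula is already an instance of the general scheme \eqref{eq:eli_Uni} in the preceding Proposition, so that the Proposition directly rewrites it into an equivalent formula with a purely existential second-order prefix; it then remains to check that this output stays inside the respective logic.

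First I would take a SO-HORN$^{r}$ formula $Q_{1}R_{1}\cdots Q_{m}R_{m}\forall\bar{x}(C_{1}\wedge\cdots\wedge C_{n})$. By Definition \ref{def:SO-HORN}, each clause body splits into the $\alpha$-conjuncts, which meet condition \textbf{1')}, and the $\beta$-conjuncts, which are atomic or negated atomic formulas $P\bar{y}$, $\neg P\bar{y}$ with $P\notin\{R_{1},\cdots,R_{m}\}$, while each head $H$ is $R_{k}\bar{z}$ or $\bot$. Since any atomic or negated atomic formula over a symbol distinct from $R_{1},\cdots,R_{m}$ is in particular a second-order formula containing none of $R_{1},\cdots,R_{m}$, the formula satisfies hypotheses \textbf{1)}, \textbf{2)}, \textbf{3)} of the Proposition. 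The Proposition then produces an equivalent formula of the shape \eqref{eq:eli_Uni_2}, namely $\exists R'_{1}\cdots\exists R'_{n}\forall\bar{x}'(\cdots)$.

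The key point I would then verify is that this existential formula is itself a SO-HORN$^{r}$ formula. Its $\alpha'$-literals satisfy \textbf{1')} by the conclusion of the Proposition, its heads $H'_{j}$ are $R'_{k}\bar{z}$ or $\bot$, and, crucially, each $\beta'_{e'f'}$ is \emph{one of the original} $\beta_{ef}$ that moreover contains none of the fresh $R'_{1},\cdots,R'_{n}$. Hence every $\beta'$ is still an atomic or negated atomic formula over a symbol outside $\{R'_{1},\cdots,R'_{n}\}$, which is exactly condition \textbf{2)} of Definition \ref{def:SO-HORN}. Thus the rewritten formula lies in the existential fragment (SO$\exists$)-HORN$^{r}$, establishing the collapse for SO-HORN$^{r}$.

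For SO-HORN$^{*r}$ I would run the same argument, the only difference being the role of the side formulas: here each $\beta_{t}$ is a first-order formula containing no $R_{1},\cdots,R_{m}$, which again is a second-order formula avoiding $R_{1},\cdots,R_{m}$, so the Proposition applies verbatim. In the resulting formula every $\beta'$ is one of these first-order formulas and, being free of the new $R'_{i}$ as well, satisfies condition \textbf{2')} of Definition \ref{def:SO-HORN}, while conditions \textbf{1')} and \textbf{3)} are inherited as before. The whole argument is essentially bookkeeping, and the one place demanding care — the only genuine obstacle — is checking that the output \eqref{eq:eli_Uni_2} obeys the stricter syntactic restrictions of the two logics rather than merely the generic form of the Proposition; this is guaranteed precisely because \eqref{eq:eli_Uni_2} reuses the original side formulas $\beta_{ef}$ and introduces no new non-Horn components.
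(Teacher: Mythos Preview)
Your proposal is correct and matches the paper's intended approach: the Corollary is stated without proof in the paper, as an immediate consequence of the preceding Proposition, and your argument is precisely the natural verification that SO-HORN$^{r}$ and SO-HORN$^{*r}$ formulas fall under the Proposition's hypotheses and that the output \eqref{eq:eli_Uni_2} remains in the respective logic. Your care in checking that the recycled $\beta'_{e'f'}$ still satisfy conditions \textbf{2)} respectively \textbf{2')} of Definition~\ref{def:SO-HORN} is exactly the point that makes the corollary go through.
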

From now on, unless explicitly stated, we will restrict ourselves
to the existential fragments of these logic defined above in the rest
of this paper. To compare the expressive power of the four logic on
all structures, we use DATALOG and S-DATALOG program as intermediate
logic, and introduce the notion of DATALOG$^{r}$ program. For the
detailed definition and semantics of DATALOG program and S-DATALOG
program we refer the reader to \cite{ef}.
\begin{definition}
A DATALOG program $\Pi$ over a vocabulary $\tau$ is a finite set
of rules of the form

\[
\beta\leftarrow\alpha_{1},\cdots,\alpha_{l}
\]
where $l\geq0$ and
\begin{description}
\item [{(1)}] each $\alpha_{i}$ is either an atomic formula or a negated
atomic formula or zero-ary relation symbol,
\item [{(2)}] $\beta$ is either an atomic formula $R\bar{x}$ or a zero-ary
relation symbol $Q$ where $R$ and $Q$ only occur positively in
each rule of $\Pi$.
\end{description}
\end{definition}
$\alpha_{1},\cdots,\alpha_{l}$ constitute the body of the rule, $\beta$
is the head of the rule. Every relation symbols in the head of some
rule of $\Pi$ are intentional, all the other symbols are extensional.
We use $(\tau,\Pi)_{int}$ and $(\tau,\Pi)_{ext}$ to denote the set
of intentional and extensional symbols(constant symbols are extensional symbols), respectively. We allow zero-ary
relation symbols here to be able to compare the expressive power of
DATALOG  with other logic. A zero-ary relation has the Boolean
value TRUE or FALSE. Let $Q$ be a zero-ary relation symbol and $\mathcal{A}$
be a structure, the interpretation of $Q$ in $\mathcal{A}$ is: ``$Q^{A}=\emptyset$''
means ``$Q^{A}=FALSE$'', ``$Q^{A}=\{\emptyset\}$'' means ``$Q^{A}=TRUE$''.

\begin{example}
This is a DATALOG program over the vocabulary $\tau=\{E,R,Q,\mathbf{s},\mathbf{t}\}$
where $(\tau,\Pi)_{int}=\{R,Q\}$, $(\tau,\Pi)_{ext}=\{E,\mathbf{s},\mathbf{t}\}$
and $\mathbf{s},\mathbf{t}$ are constant symbols.
\[
\begin{array}{cccl}
\Pi: & Rxy & \leftarrow & Exy\\
 & Rxy & \leftarrow & Exz,\, Rzy\\
 & Q & \leftarrow & R\mathbf{st}
\end{array}
\]

\end{example}
Instead of giving a detailed definition of the semantics of DATALOG
programs, we present it briefly using an example. Given a DATALOG
program $\Pi$ and a $(\tau,\Pi)_{ext}$-structure, we can apply all
the rules of $\Pi$ simultaneously to generate consecutive stages
of the intentional symbols. Consider the above example, let $R_{(0)}=\emptyset$,
$Q_{(0)}=\emptyset$ and $\mathcal{A}$ be a $(\tau,\Pi)_{ext}$-structure.
Suppose $R_{(i)}$ and $Q_{(i)}$ are known, $R_{(i+1)}$ and $Q_{(i+1)}$
can be evaluated by

\[
\begin{array}{lcl}
R_{(i+1)}xy & \leftarrow & Exy\\
R_{(i+1)}xy & \leftarrow & Exz,\, R_{(i)}zy\\
Q_{(i+1)} & \leftarrow & R_{(i)}\mathbf{st}
\end{array}
\]
where $Q_{(i+1)}=\{\emptyset\}$ iff $\mathcal{A}\models R_{(i)}\mathbf{st}$.
Because $R$ occurs only positively in the body of each rules, we
can reach a fixed-point in polynomial many steps with respect to $|A|$.
Let $R_{(\infty)}=\underset{n\geq0}{\bigcup}R_{(n)}$ denote the fixed-point.
Then $\mathcal{A}$ gives rise to a $\tau$-structure
\[
\mathcal{A}[\Pi]=(\mathcal{A},R_{(\infty)},Q_{(\infty)})
\]

A DATALOG formula has the form $(\Pi,P)\bar{t}$ where $P$ is an
$r$-ary intentional relation symbol and $\bar{t}=t_{1},\cdots,t_{r}$
are variables that don't occur in $\Pi$. Given a $(\tau,\Pi)_{ext}$-structure
$\mathcal{A}$ and $\bar{a}=a_{1},\cdots,a_{r}\in A$
\[
\begin{array}{ccc}
\mathcal{A}\models(\Pi,P)[\bar{a}] & \textrm{iff} & (a_{1},\cdots,a_{r})\in P^{\mathcal{A}[\Pi]}\end{array}
\]

If $P$ is a zero-ary relation symbol, then
\[
\begin{array}{ccc}
\mathcal{A}\models(\Pi,P) & \textrm{iff} & P^{\mathcal{A}[\Pi]}\end{array}\textrm{is TRUE}
\]

In the above example, given $a,b\in A$,

\[
\begin{array}{lcccc}
\mathcal{A}\models(\Pi,R)[a,b] & \textrm{iff} & (a,b)\in R^{\mathcal{A}[\Pi]} & \textrm{iff} & \textrm{there is a directed path from a to b,}\\
\\
\mathcal{A}\models(\Pi,Q) & \textrm{iff} & Q^{\mathcal{A}[\Pi]}\textrm{ is TRUE} & \textrm{iff} & \textrm{there is a directed path from \ensuremath{\mathbf{s}} to  \ensuremath{\mathbf{t}}.}
\end{array}
\]

\begin{definition}
We define three extensions of DATALOG programs:
\begin{itemize}
\item If we allow first-order formulas only over extensional symbols in
the body of rules of a DATALOG program, we denote this logic program
by DATALOG$^{*}$ program.
\item If we allow formulas of the form $\forall\bar{y}R\bar{y}\bar{z}$
where $R$ is an intentional relation symbol in the body of rules
of a DATALOG program, we denote this logic program by DATALOG$^{r}$
program.
\item If we allow both first-order formulas only over extensional symbols
and formulas of the form $\forall\bar{y}R\bar{y}\bar{z}$ where $R$
is an intentional relation symbol in the body of rules of a DATALOG
program, we denote this logic program by DATALOG$^{*r}$ program.
\end{itemize}
\end{definition}
\begin{example}
This is a DATALOG$^{*r}$ program over $\tau=\{R,Q,P_{1},\cdots,P_{n}\}$
where $(\tau,\Pi)_{int}=\{R,Q\}$, $(\tau,\Pi)_{ext}=\{P_{1},\cdots,P_{n}\}$
and $\phi(x,y)$ is a first-order formula over $(\tau,\Pi)_{ext}$.
\[
\begin{array}{cccl}
\Pi: & Rxy & \leftarrow & \phi(x,y)\\
 & Rxy & \leftarrow & \phi(x,z),\, Rzy\\
 & Qx & \leftarrow & \forall yRxy
\end{array}
\]
Relation $R$ denotes the transitive closure of the graph defined
by $\phi(x,y)$. $Q$ denotes the set of vertices that can reach every
vertex in this graph.
\end{example}
For a DATALOG ( DATALOG$^{*}$, DATALOG$^{r}$, DATALOG$^{*r}$, respectively
) program $\Pi$, if there is a zero-ary relation symbol occurring
in the body of some rules, e.g. $Q$, we replace every occurrences
of $Q$ by $Q'x$, where $Q'$ is a new relation symbol and $x$ is
a new variable that do not occur in $\Pi$, and then add the rule
$Q\leftarrow Q'x$ in $\Pi$. Let $\Pi'$ denote the resulting DATALOG
( DATALOG$^{*}$, DATALOG$^{r}$, DATALOG$^{*r}$, respectively )
program. It is easily check that the fixed-points of all common intentional
relations of $\Pi$ and $\Pi'$ coincide on all structures. So for
technical reason, unless stated explicitly, we restrict that the zero-ary
relation symbols only occur in the head of a rule.

It turns out that many first-order definable properties can not be
defined by DATALOG formulas. DATALOG$^{*}$ and DATALOG$^{*r}$ programs
which are equipped with first-order formulas are more expressive than
DATALOG programs. In the following we show that DATALOG$^{r}$ formulas
is enough to express all first-order definable properties. We say
a DATALOG ( DATALOG$^{*}$, DATALOG$^{r}$, DATALOG$^{*r}$, respectively)
formula $(\Pi,P)\bar{t}$ is bounded if there exists a fixed number
$k\geq0$ such that $P_{(k)}=P_{(\infty)}$ for all structures.
\begin{proposition}
\label{pro:FOeqToBndD^r}Every first-order formula is equivalent to
a bounded DATALOG$^{r}$ formula. \end{proposition}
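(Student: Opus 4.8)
The plan is to argue by induction on the structure of the given first-order formula $\psi$, attaching to each subformula a fresh intentional relation whose fixed-point value is exactly its set of satisfying tuples. First I would rewrite $\psi$ in negation normal form, so that negation is applied only to atomic formulas. The decisive observation is that, from the point of view of the program we are building, \emph{every} relation and constant symbol of the underlying vocabulary $\tau$ is extensional; hence after this rewriting each negated atom $\neg P\bar{x}$ negates an extensional symbol, which is exactly what a DATALOG rule body is permitted to contain. Consequently negation never has to be applied to a computed (intentional) relation, and the only expressive feature missing from plain DATALOG that we actually need is universal quantification over the domain.

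Next I would carry out the inductive construction, introducing for each subformula $\chi$ with free variables $\bar{x}$ an intentional relation $R_{\chi}$ defined as follows. For a literal, a single rule whose body is that literal; for a negated atom (or, more generally, whenever a head variable would otherwise be unbound) I range-restrict using an auxiliary relation $D$ that holds of every element, definable by the bounded rule $Dx\leftarrow x=x$. For $\chi_{1}\wedge\chi_{2}$ one rule whose body conjoins $R_{\chi_{1}}$ and $R_{\chi_{2}}$ on the shared variables; for $\chi_{1}\vee\chi_{2}$ two rules (padding with $D$ so that the free variables of the two disjuncts are aligned); for $\exists y\,\chi'$ the rule $R_{\chi}\bar{x}\leftarrow R_{\chi'}\bar{x}y$, using that a body variable absent from the head is implicitly existential. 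The one case that forces us out of plain DATALOG is $\forall y\,\chi'(\bar{x},y)$: here I would present $R_{\chi'}$ (after a permuting copy if necessary) with the quantified variable as its leading argument and use the rule $R_{\chi}\bar{x}\leftarrow\forall y\,R_{\chi'}y\bar{x}$, whose body is exactly an atom of the admissible shape $\forall\bar{y}R\bar{y}\bar{z}$. Since every intentional relation occurs positively and universal quantification is monotone in its relational argument, the whole program is monotone, so it has a least fixed point reached by the usual stage iteration; correctness, namely $R_{\chi}^{(\infty)}=\{\bar{a}\mid\mathcal{A}\models\chi[\bar{a}]\}$, then follows by a routine induction on subformula depth. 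The final formula is $(\Pi,R_{\psi})\bar{t}$, or a zero-ary relation when $\psi$ is a sentence.

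It remains to check boundedness. Because $R_{\chi}$ is defined only in terms of the relations $R_{\chi'}$ for proper subformulas $\chi'$, together with $D$ and extensional symbols, the dependency graph among the intentional relations is acyclic, with longest chain bounded by the nesting depth $d$ of $\psi$. Hence the literal-level relations reach their final value after one stage, the relations one level higher after two stages, and so on; after at most $d+1$ stages every intentional relation is stable, and this bound depends only on $\psi$, not on the input structure. Thus $(R_{\psi})_{(d+1)}=(R_{\psi})_{(\infty)}$ for all structures, which is precisely the definition of a bounded DATALOG$^{r}$ formula.

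I expect the real work to be bookkeeping rather than conceptual. Matching each universal quantifier of $\psi$ to the rigid shape $\forall\bar{y}R\bar{y}\bar{z}$, in which only the leading arguments may be quantified, forces one to track and permute the argument orders of the auxiliary relations, and one must confirm that these permuted copies and the domain relation $D$ do not introduce cycles that would spoil boundedness. The other point needing care is the semantic verification for the universal case: at intermediate stages $\forall y\,R_{\chi'}y\bar{x}$ may under-approximate the intended relation because $R_{\chi'}$ has not yet grown to its fixed point, so one must invoke monotonicity together with acyclicity to see that the correct value is nonetheless produced once the dependencies have stabilized.
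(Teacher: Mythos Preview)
Your proposal is correct and follows the same underlying strategy as the paper: build an acyclic layered program with one intentional relation per subformula, use the $\forall\bar{y}R\bar{y}\bar{z}$ body atom for universal quantifiers, and read off boundedness from the acyclic dependency structure.

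The main difference is in the normalization step. The paper first converts $\phi$ to prenex DNF, so the entire Boolean matrix is handled uniformly in two layers (one rule $P_{C_j}\bar v\leftarrow\alpha_{j1},\dots,\alpha_{jk_j}$ per conjunctive clause, then rules $P_{C_1\vee\cdots\vee C_n}\bar v\leftarrow P_{C_j}\bar v$ for the disjunction), after which only the quantifier prefix is treated inductively; this yields the explicit bound $m+2$ with $m$ the quantifier rank and sidesteps the padding-with-$D$ bookkeeping you anticipate. Your NNF-plus-full-structural-induction route is equally valid and arguably more modular; it also has the minor virtue that you explicitly address the argument-order constraint in $\forall\bar{y}R\bar{y}\bar{z}$ via a permuted copy, a point the paper glosses over when it writes $\forall x\,P_{\psi}\bar{x}x$ with the quantified variable in the last position. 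The trade-off is exactly what you identified: more cases and more variable-alignment housekeeping.
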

\begin{proof}
We assign every first-order formula $\phi(\bar{x})$ an equivalent
bounded DATALOG$^{r}$ formula $(\Pi_{\phi},P_{\phi})\bar{t}$. We
prove this result by induction on the quantifier rank of $\phi(\bar{x})$.

With no loss of generality, suppose $\phi(\bar{x})$ is in DNF form
\[
Q_{1}x_{1}\cdots Q_{m}x_{m}(C_{1}\vee\cdots\vee C_{n})
\]
where $Q_{i}\in\{\forall,\exists\}$ and each $C_{j}$ is a conjunction
of atomic or negated atomic formulas. Assume $\bar{v}=v_{1},\cdots,v_{r}$
are all the variables in $C_{1}\vee\cdots\vee C_{n}$,
\begin{itemize}
\item for each $C_{j}=\alpha_{j1}\wedge\alpha_{j2}\wedge\cdots\wedge\alpha_{jk_{j}}$,
set
\[
\Pi_{C_{j}}=\{P_{C_{j}}\bar{v}\leftarrow\alpha_{j1},\alpha_{j2},\cdots,\alpha_{jk_{j}}\}
\]

\item set
\[
\Pi_{C_{1}\vee\cdots\vee C_{n}}=\overset{n}{\underset{j=1}{\bigcup}}\Pi_{C_{j}}\cup\overset{n}{\underset{j=1}{\bigcup}}\{P_{C_{1}\vee\cdots\vee C_{n}}\bar{v}\leftarrow P_{C_{j}}\bar{v}\}
\]

\end{itemize}

It's easily seen that $C_{1}\vee\cdots\vee C_{n}(\bar{v})$ and $(\Pi_{C_{1}\vee\cdots\vee C_{n}},P_{C_{1}\vee\cdots\vee C_{n}})\bar{t}$
are equivalent. Suppose $\psi(\bar{x},x)$ and $(\Pi_{\psi},P_{\psi})\bar{t}$
are equivalent and $P_{\phi}$ is a new relation symbol not in $\Pi_{\psi}$,
\begin{itemize}
\item if $\phi(\bar{x})=\forall x\psi(\bar{x},x)$, set
\[
\Pi_{\phi}=\Pi_{\psi}\cup\{P_{\phi}\bar{x}\leftarrow\forall xP_{\psi}\bar{x}x\}
\]

\item if $\phi(\bar{x})=\exists x\psi(\bar{x},x)$, set
\[
\Pi_{\phi}=\Pi_{\psi}\cup\{P_{\phi}\bar{x}\leftarrow P_{\psi}\bar{x}x\}
\]

\end{itemize}

It is easy to check that $\phi(\bar{x})$ and $(\Pi_{\phi},P_{\phi})\bar{t}$
are equivalent. To prove $(\Pi_{\phi},P_{\phi})\bar{t}$ is bounded,
observe that each $P_{C_{j}}$ can reach its fixed-point in one step
and $P_{C_{1}\vee\cdots\vee C_{n}}$ can reach its fixed-point in
two steps. Suppose the quantifier rank of $\phi(\bar{x})$ is $m$,
$P_{\phi}$ can reach its fixed-point in $m+2$ steps.

\end{proof}
\begin{corollary}
\label{cor:D^*D^rD^*r}DATALOG$^{*}\subseteq$ DATALOG$^{r}\equiv$
DATALOG$^{*r}$ \end{corollary}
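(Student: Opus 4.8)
The plan is to prove the two assertions of the corollary by uniformly replacing first-order subformulas with the bounded DATALOG$^{r}$ gadgets supplied by Proposition \ref{pro:FOeqToBndD^r}. Note first that the inclusion DATALOG$^{r}\subseteq$ DATALOG$^{*r}$ is immediate from the definitions, since every DATALOG$^{r}$ rule is already a legal DATALOG$^{*r}$ rule (the latter merely adds the option of first-order bodies over the extensional symbols). Likewise every DATALOG$^{*}$ program is literally a DATALOG$^{*r}$ program that happens to use no atom of the form $\forall\bar{y}R\bar{y}\bar{z}$. Hence both displayed relations follow once I establish the single inclusion DATALOG$^{*r}\subseteq$ DATALOG$^{r}$, and I would organise the proof around that reduction.

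So let $(\Pi,P)\bar{t}$ be a DATALOG$^{*r}$ formula. The only feature of $\Pi$ that a DATALOG$^{r}$ program may not use is the occurrence of a genuine first-order formula $\phi(\bar{y})$ over the extensional symbols in the body of some rule. For each such $\phi$ I would invoke Proposition \ref{pro:FOeqToBndD^r} to obtain a bounded DATALOG$^{r}$ program $\Pi_{\phi}$ together with an intentional relation symbol $P_{\phi}$ for which $(\Pi_{\phi},P_{\phi})\bar{t}$ is equivalent to $\phi$; renaming as needed, I take the symbols introduced by the various $\Pi_{\phi}$ to be pairwise distinct and disjoint from the symbols of $\Pi$. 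I then form a new program $\Pi'$ by adjoining all rules of every $\Pi_{\phi}$ to $\Pi$ and replacing each body occurrence of $\phi(\bar{y})$ by the atom $P_{\phi}\bar{y}$. By construction $\Pi'$ contains no first-order bodies, only atomic or negated atomic literals over extensional symbols together with atoms $\forall\bar{y}R\bar{y}\bar{z}$ (either inherited from $\Pi$ or produced inside some $\Pi_{\phi}$); and since all auxiliary symbols $P_{\phi}$ occur only positively, $\Pi'$ is a legitimate DATALOG$^{r}$ program.

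It remains to verify that $(\Pi',P)\bar{t}$ computes the same relation as $(\Pi,P)\bar{t}$. The key structural fact is that every intentional symbol occurs positively in these programs, so the one-step operators are monotone and their fixed points are least fixed points. Each auxiliary relation $P_{\phi}$ depends only on the extensional symbols of $\Pi$, so $\Pi'$ computes on it exactly the relation defined by $\phi$, independently of the host program's stages, and by boundedness it stabilises within a fixed number of steps. I would make the equality precise by comparing stage-by-stage evaluations, or equivalently by observing that the least fixed point of the combined monotone operator, restricted to the relations of $\Pi$, coincides with the fixed point of $\Pi$ once the $P_{\phi}$ are read off as the relations defining the respective $\phi$'s. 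This yields DATALOG$^{*r}\subseteq$ DATALOG$^{r}$, and with the two trivial inclusions noted above it gives DATALOG$^{*}\subseteq$ DATALOG$^{r}\equiv$ DATALOG$^{*r}$.

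The main obstacle I anticipate is exactly this last verification: one must argue that interleaving the stages of the gadgets $\Pi_{\phi}$ with the stages of the host program does not alter the final value of $P$, i.e. that the monotone fixed point of $\Pi'$ agrees with what one obtains by first evaluating all the $\phi$'s and then running $\Pi$. Monotonicity (positivity of all intentional symbols) together with the boundedness of the gadgets is what makes this go through, but the bookkeeping must be handled with care: the matching between the free variables $\bar{y}$ of $\phi$ and the arguments of $P_{\phi}$, the freshness and positivity of the auxiliary symbols, and the fact that negation appears only on the never-changing extensional symbols all have to be tracked so that the substitution is faithful.
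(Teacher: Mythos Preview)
Your proposal is correct and follows essentially the same route as the paper: replace each non-atomic first-order body formula $\phi_i$ by the head atom $P_{\phi_i}$ of the bounded DATALOG$^{r}$ gadget supplied by Proposition~\ref{pro:FOeqToBndD^r}, and adjoin the gadget rules to obtain an equivalent DATALOG$^{r}$ program. The paper's argument is terser---it simply asserts the equivalence of $(\Pi',P)\bar t$ and $(\Pi,P)\bar t$ without the stage-by-stage discussion you flag as the main obstacle---but the construction is identical.
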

\begin{proof}
Suppose $(\Pi,P)\bar{t}$ is a DATALOG$^{*}$( DATALOG$^{*r}$ ) formula
over vocabulary $\tau$ and $\phi_{1}(\bar{x}_{1}),\cdots,\phi_{n}(\bar{x}_{n})$
are the no atomic (or negated atomic) first-order formulas that occur
in $\Pi$ and over $(\tau,\Pi)_{ext}$. Let $(\Pi_{\phi_{i}},P_{\phi_{i}})\bar{t}\,(1\leq i\leq n)$
be the corresponding DATALOG$^{r}$ formulas that equivalent to $\phi_{i}(\bar{x}_{i})\,(1\leq i\leq n)$,
respectively. Set
\[
\Pi'=\Pi[\phi_{1}(\bar{x}_{1})/P_{\phi_{1}}\bar{x}_{1},\cdots,\phi_{n}(\bar{x}_{n})/P_{\phi_{n}}\bar{x}_{n}]\cup\overset{n}{\underset{i=1}{\bigcup}}\Pi_{\phi_{i}}
\]
where $\Pi[\phi_{1}(\bar{x}_{1})/P_{\phi_{1}}\bar{x}_{1},\cdots,\phi_{n}(\bar{x}_{n})/P_{\phi_{n}}\bar{x}_{n}]$
denotes the logic program obtained by replacing each $\phi_{i}(\bar{x}_{i})$
by $P_{\phi_{i}}\bar{x}_{i}\,(1\leq i\leq n)$ in $\Pi$. The DATALOG$^{r}$
formula $(\Pi',P)\bar{t}$ is equivalent to $(\Pi,P)\bar{t}$.
\end{proof}
A DATALOG program $\Pi$ is positive if no atomic occurs negatively
in any rule of $\Pi$. M. Ajtai and Y. Gurevich\cite{ajtai} showed
that a positive DATALOG formula is bounded if and only if it is equivalent
to a (existential positive) first-order formula. For a DATALOG formula,
if it is bounded then it equals to a first-order formula, but the
converse is false. They find a DATALOG formula which is equivalent
to a first-order formula but not bounded. We show that this statement
is true for DATALOG$^{r}$ formulas.
\begin{theorem}
\label{thm:BndnessOfD^r}For any DATALOG$^{r}$ formula $(\Pi,P)\bar{t}$
the following are equivalent:
\begin{description}
\item [{(\mbox{i})}] $(\Pi,P)\bar{t}$ is equivalent to a first-order formula.
\item [{(\mbox{ii})}] $(\Pi,P)\bar{t}$ is equivalent to a bounded DATALOG$^{r}$
formula.
\end{description}
\end{theorem}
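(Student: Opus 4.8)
The plan is to prove the two implications separately, and the first observation I would make is that the substantive content is already packaged in Proposition~\ref{pro:FOeqToBndD^r}, so the theorem reduces to a short bookkeeping argument together with one easy unfolding lemma. Accordingly I would not attempt any locality or Ehrenfeucht--Fra\"iss\'e argument on $\Pi$ directly; all the work has been done upstream.

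For the direction (i)$\Rightarrow$(ii) I would argue directly. Suppose $(\Pi,P)\bar t$ is equivalent to some first-order formula $\phi(\bar t)$. By Proposition~\ref{pro:FOeqToBndD^r} there is a bounded DATALOG$^{r}$ formula $(\Pi_{\phi},P_{\phi})\bar t$ equivalent to $\phi(\bar t)$. Since equivalence of formulas is transitive, $(\Pi,P)\bar t$ is equivalent to $(\Pi_{\phi},P_{\phi})\bar t$, a bounded DATALOG$^{r}$ formula, which is exactly (ii). No further work is needed in this direction.

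For (ii)$\Rightarrow$(i) the key auxiliary statement I would isolate is: every bounded DATALOG$^{r}$ formula is equivalent to a first-order formula. Granting this, if $(\Pi,P)\bar t$ is equivalent to a bounded DATALOG$^{r}$ formula, then it is equivalent to a first-order formula, which is (i). To prove the auxiliary statement, let $(\Pi',P')\bar t$ be bounded with uniform bound $k$, so $P'_{(k)}=P'_{(\infty)}$ on every structure, and let $R_{1},\dots,R_{s}$ be the intentional symbols of $\Pi'$. I would show by induction on the stage $i$ that every stage relation $R_{j,(i)}$ is first-order definable over the extensional vocabulary. The base case $i=0$ is immediate since $R_{j,(0)}=\emptyset$. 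For the step, $R_{j,(i+1)}$ is the disjunction, over the rules of $\Pi'$ with head $R_{j}$, of the corresponding bodies evaluated at stage $i$; each body is a conjunction of extensional atoms and negated extensional atoms, of intentional atoms $R_{l}\bar x$, and of subformulas $\forall\bar y R_{l}\bar y\bar z$. Substituting the first-order definitions of the $R_{l,(i)}$ supplied by the induction hypothesis, every conjunct becomes first-order; in particular $\forall\bar y R_{l}\bar y\bar z$ becomes $\forall\bar y\,\chi_{l,i}(\bar y,\bar z)$, where $\chi_{l,i}$ is the first-order definition of $R_{l,(i)}$, which is again first-order. Hence $R_{j,(i+1)}$ is first-order. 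Reading this off at $i=k$ and invoking $P'_{(k)}=P'_{(\infty)}$ gives a single first-order formula equivalent to $(\Pi',P')\bar t$, completing the auxiliary statement.

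The step I expect to require the most care is conceptual rather than computational: one must resist proving, and stating, the stronger Ajtai--Gurevich form ``$(\Pi,P)\bar t$ is itself bounded.'' Because DATALOG $\subseteq$ DATALOG$^{r}$, the Ajtai--Gurevich query~\cite{ajtai} that is first-order expressible but not bounded is already a DATALOG$^{r}$ formula, so the stronger statement is simply false here. This is precisely why clause (ii) reads ``equivalent to a bounded DATALOG$^{r}$ formula'' rather than ``bounded,'' and why the argument must route through Proposition~\ref{pro:FOeqToBndD^r}. The only mild additional attention needed is the handling of zero-ary intentional symbols, which by the normalization fixed earlier in this section may be assumed to occur only in rule heads and are absorbed without change into the same stagewise induction.
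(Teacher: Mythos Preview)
Your proposal is correct and follows essentially the same approach as the paper: for (i)$\Rightarrow$(ii) you invoke Proposition~\ref{pro:FOeqToBndD^r}, and for (ii)$\Rightarrow$(i) you observe that each finite stage $P_{(k)}$ of a DATALOG$^{r}$ program is first-order definable (by induction on the stage), so a bounded program is first-order. The paper merely sketches this second direction by pointing to the analogous argument in~\cite{ef}, while you spell out the stagewise induction explicitly, but the underlying strategy is identical.
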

\begin{proof}
(i) $\Rightarrow$ (ii) is by Proposition \ref{pro:FOeqToBndD^r}.
The proof for (ii) $\Rightarrow$ (i) is similar to that of Proposition
9.3.1 in \cite{ef}, we give only the main ideas. Observing that given
a DATALOG$^{r}$ formula $(\Pi,P)\bar{t}$ and a fixed $k$, the $k$-th
stage $P_{(k)}$ in the evaluation is expressible by a first-order
formula. If $(\Pi,P)\bar{t}$ is bounded then there exists an $l\geq0$
such that $P_{(l)}=P_{(\infty)}$ on all structures.\end{proof}
\begin{remark}
If a DATALOG formula is equivalent to a bounded DATALOG formula, then
the DATALOG formula itself is bounded. But if a DATALOG$^{r}$ formula
is equivalent to a bounded DATALOG$^{r}$ formula, itself is not necessary
bounded. We give a counterexample. Consider an atomic formula $Px$,
let $<$ and $succ$ be two 2-ary relation symbols and $\mathbf{min}$
be a constant symbol. Construct two first-order formulas $\psi_{1}(<,\mathbf{min})$
and $\psi_{2}(succ)$, where $\psi_{1}(<,\mathbf{min})$ says that
$<$ is a linear ordering relation and $\mathbf{min}$ is the least
element, $\psi_{2}(succ)$ says that $succ$ is a successor relation.
By Proposition \ref{pro:FOeqToBndD^r}, $\psi_{1}$ is equivalent
to a DATALOG$^{r}$ formula $(\Pi_{1},P_{<})$, $\neg\psi_{1}$ is
equivalent to a DATALOG$^{r}$ formula $(\Pi'_{1},P_{\nless})$, $\psi_{2}$
is equivalent to a DATALOG$^{r}$ formula $(\Pi_{2},P_{succ})$ and
$\neg\psi_{2}$ is equivalent to a DATALOG$^{r}$ formula $(\Pi'_{2},P_{\neg succ})$.
Set
\[
\begin{array}{lrcl}
\Pi_{3}: & Q\mathbf{min} & \leftarrow & P_{<},P_{succ}\\
 & Qy & \leftarrow & Qx,succ(x,y),P_{<},P_{succ}\\
 & P'x & \leftarrow & Qx,Px,P_{<},P_{succ}\\
 & P'x & \leftarrow & Px,P_{<},P_{\neg succ}\\
 & P'x & \leftarrow & Px,P_{\nless}
\end{array}
\]
and $\Pi=\Pi_{1}\cup\Pi_{2}\cup\Pi'_{1}\cup\Pi'_{2}\cup\Pi_{3}$.
The DATALOG$^{r}$ program $\Pi_{3}$ says that if $<$ is a linear
ordering relation, $\mathbf{min}$ is the least element and $succ$
is a successor relation, then it checks all elements one by one that
whether they have the property $P$. Otherwise, just let $P'=P$.
The DATALOG$^{r}$ formula $(\Pi,P')t$ is equivalent to $Px$ which
is a first-order formula, but for a $\{P,<,succ,\mathbf{min}\}$-structure
$\mathcal{A}$ where $<$ interpreted as a linear ordering relation,
$\mathbf{min}$ interpreted as the least element and $succ$ interpreted
as a successor relation, $P'$ reach its fixed-point in more than
$|A|$ steps.
\end{remark}
Corollary \ref{cor:D^*D^rD^*r} shows that DATALOG$^{*}$ programs
is a sublogic of DATALOG$^{r}$ programs. In fact, it is proper.
Before proving this result, we review stratified DATALOG program
( by short: S-DATALOG program ) which is more powerful than DATALOG
program.
\begin{definition}
\label{def:S-DATALOG}A stratified DATALOG program $\Sigma$, denoted
by S-DATALOG program $\Sigma$, is a sequence $\Pi_{0},\Pi_{1},\cdots,\Pi_{n}$
of DATALOG programs over vocabularies $\tau_{0},\tau_{1},\cdots,\tau_{n}$,
respectively, such that $(\tau_{m+1},\Pi_{m+1})_{ext}=\tau_{m}\,(0\leq m<n)$.
\end{definition}
Given a S-DATALOG program $\Sigma=(\Pi_{0},\Pi_{1},\cdots,\Pi_{n})$
and a $(\tau_{0},\Pi_{0})_{ext}$-structure $\mathcal{A}$, we set
\[
\mathcal{A}[\Sigma]=(\ldots(\mathcal{A}[\Pi_{0}])[\Pi_{1}]\ldots)[\Pi_{n}]
\]
A S-DATALOG formula has the form $(\Sigma,P)\bar{t}$ where $P$ is
an $r$-ary intentional relation symbol of one of the constitutes
$\Pi_{i}$ and $\bar{t}=t_{1},\cdots,t_{r}$ are variables that do
not occur in $\Sigma$.
\begin{example}
The following is an S-DATALOG program $\Sigma=(\Pi_{0},\Pi_{1})$
over the vocabulary $\tau_{0}=\{E,R\},\tau_{1}=\{E,R,P\}$ where $(\tau_{0},\Pi_{0})_{int}=\{R\}$,
$(\tau_{0},\Pi_{0})_{ext}=\{E\}$, $(\tau_{1},\Pi_{1})_{int}=\{P\}$
and $(\tau_{1},\Pi_{1})_{ext}=\{E,R\}$.
\[
\begin{array}{crcl}
\Pi_{0}: & Rxy & \leftarrow & Exy\\
 & Rxy & \leftarrow & Exz,Rzy
\end{array}\begin{array}{cccc}
\Pi_{1}: & Pxy & \leftarrow & \neg Rxy\\
\\
\end{array}
\]

For any $(\tau_{0},\Pi_{0})_{ext}$-structure $\mathcal{A}$ and $a,b\in A$,
\[
\begin{array}{lcccc}
\mathfrak{A}\models(\Sigma,R)[a,b] & \textrm{iff} & (a,b)\in R^{\mathcal{A}[\Sigma]} & \textrm{iff} & \textrm{there is a directed path from a to b,}\\
\\
\mathfrak{A}\models(\Sigma,P)[a,b] & \textrm{iff} & (a,b)\in P^{\mathcal{A}[\Sigma]} & \textrm{iff} & \textrm{there is no directed path from a to b.}
\end{array}
\]
\end{example}
\begin{definition}
In Definition \ref{def:S-DATALOG},
\begin{itemize}
\item if $\Sigma$ is a sequence $\Pi_{0},\Pi_{1},\cdots,\Pi_{n}$ of DATALOG$^{*}$
programs, we denote this logic program by S-DATALOG$^{*}$ program,
\item if $\Sigma$ is a sequence $\Pi_{0},\Pi_{1},\cdots,\Pi_{n}$ of DATALOG$^{r}$
programs, we denote this logic program by S-DATALOG$^{r}$ program.
\end{itemize}
\end{definition}
\begin{proposition}
\label{pro:S-DALG=S-DALG^*}S-DATALOG $\equiv$ S-DATALOG$^{*}$\end{proposition}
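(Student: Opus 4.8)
The inclusion S-DATALOG $\subseteq$ S-DATALOG$^{*}$ is immediate: every DATALOG program is, by definition, a DATALOG$^{*}$ program (one simply never places a nonatomic first-order formula in a body), so every S-DATALOG program is already an S-DATALOG$^{*}$ program. All the work lies in the converse inclusion S-DATALOG$^{*}\subseteq$ S-DATALOG, and the plan is to reduce it to the auxiliary fact that every first-order formula $\phi(\bar{x})$ over a vocabulary $\sigma$ is equivalent to an S-DATALOG formula $(\Sigma_{\phi},P_{\phi})\bar{t}$ in which every symbol of $\sigma$ is extensional (it occurs only in the extensional vocabulary of the first stratum of $\Sigma_{\phi}$). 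This is the S-DATALOG analogue of Proposition \ref{pro:FOeqToBndD^r}, the difference being that the construct $\forall\bar{y}R\bar{y}\bar{z}$ is no longer available, so universal quantifiers must be simulated through stratified negation instead of directly. I would prove it by induction on the structure of $\phi$, processing the formula from the innermost subformula outward.

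For the base case and the propositional and existential steps I can remain within a single stratum, exactly as in the proof of Proposition \ref{pro:FOeqToBndD^r}: a quantifier-free matrix $C_{1}\vee\cdots\vee C_{n}$ is handled by the rules $P_{C_{j}}\bar{v}\leftarrow\alpha_{j1},\dots,\alpha_{jk_{j}}$ together with $P_{C_{1}\vee\cdots\vee C_{n}}\bar{v}\leftarrow P_{C_{j}}\bar{v}$ (the negated literals are permitted here because they are over the extensional symbols $\sigma$), and an existential quantifier $\exists x\,\psi$ is the projection rule $P_{\phi}\bar{x}\leftarrow P_{\psi}\bar{x}x$. The essential new ingredient is the universal quantifier: exploiting $\forall x\,\psi\equiv\neg\exists x\,\neg\psi$, I would open a fresh stratum carrying $P_{\neg\psi}\bar{y}x\leftarrow\neg P_{\psi}\bar{y}x$ (legal because $P_{\psi}$, being intentional in the previous stratum, is extensional here), project with $P_{\exists x\neg\psi}\bar{y}\leftarrow P_{\neg\psi}\bar{y}x$ in the same stratum, and then open one more stratum with $P_{\forall x\psi}\bar{y}\leftarrow\neg P_{\exists x\neg\psi}\bar{y}$. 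Thus each universal quantifier is absorbed into a bounded number of additional strata, and the discipline $(\tau_{m+1},\Pi_{m+1})_{ext}=\tau_{m}$ is preserved throughout. Two subprograms joined by $\wedge$ or $\vee$ are merged by concatenating their strata (they share the common base $\sigma$ and otherwise use disjoint fresh intentional symbols) and appending a final joining rule.

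With this fact in hand, the converse inclusion follows by a substitution argument mirroring Corollary \ref{cor:D^*D^rD^*r}. Given an S-DATALOG$^{*}$ program $\Sigma=(\Pi_{0},\dots,\Pi_{n})$, consider a stratum $\Pi_{m}$ and the nonatomic first-order formulas $\phi_{1},\dots,\phi_{k}$ that occur in its rule bodies; by the definition of DATALOG$^{*}$ each $\phi_{i}$ is over the extensional vocabulary $\tau_{m-1}$ of $\Pi_{m}$, which is precisely the vocabulary already fully computed after stratum $\Pi_{m-1}$. I would replace each occurrence of $\phi_{i}(\bar{x}_{i})$ in $\Pi_{m}$ by a fresh intentional atom $P_{\phi_{i}}\bar{x}_{i}$ and splice the computing S-DATALOG programs $\Sigma_{\phi_{i}}$ (with base $\tau_{m-1}$) in as new strata inserted immediately before $\Pi_{m}$. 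Performing this for every stratum yields a genuine S-DATALOG program equivalent to the original one.

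I expect the main obstacle to be bookkeeping rather than conceptual. One must verify that splicing in the auxiliary strata never violates the stratification condition $(\tau_{m+1},\Pi_{m+1})_{ext}=\tau_{m}$ and never lets an intentional symbol occur negatively within its own stratum. The delicate points are ensuring that all freshly introduced relation symbols are pairwise distinct and distinct from the symbols of $\Sigma$, and checking that the inserted programs only add new intentional symbols while leaving the relations of $\tau_{m-1}$ untouched, so that the fixed points of the original intentional relations are unchanged. Once the symbol management is arranged carefully, correctness is a routine induction on the sequence of strata.
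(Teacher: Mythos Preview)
Your proposal is correct and follows exactly the route sketched in the paper: establish that every first-order formula is equivalent to an S-DATALOG formula, then substitute these in for the first-order bodies of an S-DATALOG$^{*}$ program. The paper's own proof is only a two-sentence sketch that states this idea and leaves the details to the reader; you have supplied those details, in particular the simulation of $\forall x\psi$ via $\neg\exists x\neg\psi$ using two extra strata, and the bookkeeping for splicing the auxiliary programs into the stratification.
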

\begin{proof}
The main idea of the proof is that every first-order formula $\phi(\bar{x})$
is equivalent to a S-DATALOG formula $(\Sigma_{\phi},P_{\phi})\bar{t}$.
So we can replace $\phi(\bar{x})$ in S-DATALOG$^{*}$ program $\Sigma$
by $P_{\phi}\bar{x}$ and add $\Sigma_{\phi}$ to $\Sigma$. The details
are left to the reader.\end{proof}
\begin{theorem}
\label{thm:S-DALG^r=FO(LFP)}DATALOG$^{r}\equiv$ S-DATALOG$^{r}\equiv$
FO(LFP)\end{theorem}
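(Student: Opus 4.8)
The plan is to prove both equivalences at once by closing the cycle of inclusions $\textrm{DATALOG}^{r}\subseteq\textrm{S-DATALOG}^{r}\subseteq\textrm{FO(LFP)}\subseteq\textrm{DATALOG}^{r}$. The first inclusion is immediate, since a single DATALOG$^{r}$ program is the one-element sequence $(\Pi_{0})$ and hence an S-DATALOG$^{r}$ program. For the second, I would observe that in every rule of a DATALOG$^{r}$ program each intentional symbol occurs only positively, either as an atom $R\bar{x}$ or inside the revised construct $\forall\bar{y}R\bar{y}\bar{z}$, while the remaining literals range over extensional symbols only. Hence the operator induced by a single stratum $\Pi_{i}$ on its intentional relations is monotone, and its stage computation is exactly a simultaneous least fixed point of a first-order operator positive in those relations; such a fixed point is FO(LFP)-definable, after reducing the simultaneous induction to a single application of $\mathrm{lfp}$. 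Composing the strata, the relations computed by $\Pi_{i}$ occur as parameters---possibly negatively---in $\Pi_{i+1}$; since FO(LFP) is closed under negation, Boolean connectives and substitution, the whole S-DATALOG$^{r}$ formula is FO(LFP)-definable.

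The real work is the third inclusion, $\textrm{FO(LFP)}\subseteq\textrm{DATALOG}^{r}$. I would argue by induction on the structure of an FO(LFP) formula that the query it defines is DATALOG$^{r}$-definable. First-order atoms and first-order combinations over the input are handled by Proposition \ref{pro:FOeqToBndD^r}, together with Corollary \ref{cor:D^*D^rD^*r}, which lets me use arbitrary first-order formulas over extensional symbols in rule bodies. A subformula $\psi$ positive in a fixed-point variable $R$ is compiled into DATALOG$^{r}$ rules by recursion on $\psi$ after putting it in negation normal form: conjunctions and disjunctions become joins and alternative rules, existential quantifiers become projection onto a body variable, and a universal quantifier $\forall y\,\chi$ over a subformula already compiled into an intentional predicate $S$ is realised precisely by the revised construct, as the rule body $\forall y\,S(y,\bar{v})$. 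The operator $[\mathrm{lfp}_{R,\bar{x}}\psi]$ is then captured by taking the DATALOG$^{r}$ fixed point of the rule $R\bar{x}\leftarrow\psi$; since everything in sight is monotone, the least fixed point of the combined program, projected onto $R$, coincides with the least fixed point of the operator defined by $\psi$ (any fixed point of the program makes every auxiliary predicate evaluate its subformula correctly, so it projects to a fixed point of $\psi$, and conversely the correct evaluation over $\mathrm{lfp}(\psi)$ is a fixed point of the program).

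The only step that does not go through routinely is \emph{negation}: an FO(LFP) formula may negate a subformula, in particular a fixed point, whereas in DATALOG$^{r}$ every intentional symbol occurs positively. I expect this to be the main obstacle, and I would isolate it as the statement that DATALOG$^{r}$-definable relations are closed under complementation; granting this, negation is absorbed into the induction above, and the collapse $\textrm{S-DATALOG}^{r}\subseteq\textrm{DATALOG}^{r}$ then follows from the cycle (its only surplus over a single stratum being inter-stratum negation). To complement an inductively defined relation $I=\mathrm{lfp}(\Gamma)$ for a positive first-order operator $\Gamma$, I would invoke the Stage Comparison Theorem: the stage-comparison relations $\leq_{\Gamma}$ and $<_{\Gamma}$, which record the order in which tuples enter $I$, are themselves definable by a simultaneous positive induction and hence DATALOG$^{r}$-definable. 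The complement of $I$ can then be written as a universal statement over these predicates---intuitively, $\bar{a}\notin I$ exactly when every tuple that ever enters $I$ enters strictly before $\bar{a}$ would---and the delicate point, which I expect to require care, is to arrange the stage-comparison predicates so that this universal quantification ranges over \emph{positive} occurrences, so that it is expressible by the construct $\forall\bar{y}R\bar{y}\bar{z}$. This is exactly the feature that separates DATALOG$^{r}$ from DATALOG$^{*}$: the revised universal construct supplies the quantification over a recursive predicate that ordinary DATALOG, lacking negation of recursion, cannot express, and it is what raises the expressive power all the way to FO(LFP).
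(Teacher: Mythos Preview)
Your overall strategy is sound and would succeed, but for the crucial inclusion $\textrm{FO(LFP)}\subseteq\textrm{DATALOG}^{r}$ you take a considerably longer road than the paper. Instead of a structural induction on FO(LFP) formulas that must confront negation head-on, the paper simply invokes the normal form for FO(LFP) (Theorem~9.4.2 of \cite{ef}): every FO(LFP) formula is equivalent to one of the shape $\exists u\,[\textrm{LFP}_{\bar z,Z}\,\psi(\bar x,\bar z)]\,\tilde u$ with $\psi$ purely first-order. This single citation absorbs all the difficulty with negation and nested fixed points; indeed, the proof of that normal form theorem uses precisely the stage-comparison machinery you propose to deploy. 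With the normal form in hand the translation is almost mechanical: $\psi$ is first-order, hence DATALOG$^{r}$ by Proposition~\ref{pro:FOeqToBndD^r}; the fixed-point variable $Z$ occurs only positively in $\psi$, so adding the rule $Z\bar x\bar z\leftarrow P_{\psi}\bar x\bar z$ keeps the program inside DATALOG$^{r}$; a final projection rule handles the outer $\exists u$.

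What your route buys is self-containment and an explanation of \emph{why} the revised construct $\forall\bar y\,R\bar y\bar z$ is exactly the missing ingredient for closure under complement, which the paper's argument leaves implicit inside the cited normal form. The price is that the ``delicate point'' you flag is genuinely delicate: the naive formulation $\forall\bar b\,(\bar b\in I\rightarrow \bar b<_{\Gamma}\bar a)$ has $I$ occurring negatively, so you really do need the dual stage-comparison relation (in the Moschovakis style) or Immerman's simultaneous-induction encoding to get a predicate in which the universal ranges over a positive occurrence. That can be done, but it is essentially reproving the hard half of the normal form theorem rather than citing it.
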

\begin{proof}
We prove by showing that DATALOG$^{r}\subseteq$ S-DATALOG$^{r}\subseteq$
FO(LFP) $\subseteq$ DATALOG$^{r}$. The containment DATALOG$^{r}\subseteq$
S-DATALOG$^{r}$ is trivial. To prove S-DATALOG$^{r}\subseteq$ FO(LFP),
we first consider the case when $(\Pi,P)\bar{t}$ is a DATALOG$^{r}$
formula. For each intentional relation symbol $Q$ in $\Pi$, we construct
a first-order formula
\[
\begin{array}{r}
\phi_{Q}(\bar{x}_{Q})=\bigvee\{\exists\bar{y}(\gamma_{1}\wedge\cdots\wedge\gamma_{l})\mid Q\bar{x}_{Q}\leftarrow\gamma_{1},\cdots,\gamma_{l}\in\Pi\textrm{ and }\bar{y}\\
\textrm{are the free variables in \ensuremath{\gamma_{1},\cdots,\gamma_{l}}except \ensuremath{\bar{x}_{Q}}}\}
\end{array}
\]
By the semantics of DATALOG$^{r}$, $(\Pi,P)\bar{t}$ is equivalent
to the FO(S-LFP) formula (for details of FO(S-LFP) we refer the reader
to \cite{ef})
\[
[\textrm{S-LFP}_{\bar{x}_{P},\, P,\,\bar{x}_{Q_{1}},\, Q_{1},\cdots,\bar{x}_{Q_{n}},\, Q_{n}}\phi_{P},\phi_{Q_{1}},\cdots,\phi_{Q_{n}}]\bar{t}
\]
which is equivalent to a FO(LFP) formula. For an S-DATALOG$^{r}$
formula $(\Sigma,P)\bar{t}$ which over the program $\Sigma=(\Pi_{0},\Pi_{1},\cdots,\Pi_{n})$,
we prove by induction on $n$. For each formula $(\Pi_{0},P)\bar{t}$
we find an equivalent FO(LFP) formula as above. Suppose every formula
$(\Pi_{i},P)\bar{t}\,(0\leq i\leq j)$, where $P$ is an intentional
relation symbol of $\Pi_{i}$, has an equivalent FO(LFP) formula,
we replace $P\bar{x}$ by the equivalent FO(LFP) formula in $\Pi_{j+1}$
and deal with $\Pi_{j+1}$ as a DATALOG$^{r}$ program.

To prove FO(LFP) $\subseteq$ DATALOG$^{r}$, we use the normal form
of FO(LFP) formulas. It has been shown that every FO(LFP) formula
$\phi(\bar{x})$ is equivalent to a formula of the form (Theorem 9.4.2,
\cite{ef})
\[
\exists u[\textrm{LFP}_{\bar{z},Z}\psi(\bar{x},\bar{z})]\tilde{u}
\]
where $\psi$ is a first-order formula. Without loss of generality,
we assume that $\psi$ is in DNF form
\[
Q_{1}x_{1}\cdots Q_{m}x_{m}(C_{1}\vee\cdots\vee C_{n})
\]
where $Q_{i}\in\{\forall,\exists\}$ and each $C_{j}$ is a conjunction
of atomic or negated atomic formulas. By Proposition \ref{pro:FOeqToBndD^r}
we know that there is a DATALOG$^{r}$ formula $(\Pi,P)\bar{x}\bar{z}$
such that $\psi(\bar{x},\bar{z})$ and $(\Pi,P)\bar{x}\bar{z}$ are
equivalent. Set
\[
\Pi'=\Pi[Z\bar{z}/Z\bar{x}\bar{z}]\cup\{Z\bar{x}\bar{z}\leftarrow P\bar{x}\bar{z},Q\bar{x}\leftarrow Z\bar{x}\tilde{u}\}
\]
where $\Pi[Z\bar{z}/Z\bar{x}\bar{z}]$ denotes the logic program obtained
by replacing each $Z\bar{z}$ by $Z\bar{x}\bar{z}$ in $\Pi$. By
the definition of FO(LFP) we know that $Z$ occurs only positively
in each $C_{j}$, so $\Pi'$ is a DATALOG$^{r}$ program and $(\Pi',Q)\bar{t}$
is equivalent to $\phi(\bar{x})$.
\end{proof}
P. Kolaitis\cite{pk} showed that S-DATALOG is equivalent to EFP,
H. Ebbinghaus and J. Flum\cite{ef} showed that S-DATALOG is equivalent
to BFP. Both EFP and BFP are proper subsets of FO(LFP). Combining
Proposition \ref{pro:S-DALG=S-DALG^*} and Theorem \ref{thm:S-DALG^r=FO(LFP)}
we conclude the following corollary.
\begin{corollary}
DATALOG$^{*}\subset$ DATALOG$^{r}$
\end{corollary}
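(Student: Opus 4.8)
The inclusion $\text{DATALOG}^{*}\subseteq\text{DATALOG}^{r}$ is already in hand from Corollary \ref{cor:D^*D^rD^*r}, so the only thing left to establish is that the containment is \emph{strict}. My plan is not to hunt directly for a separating program, but to sandwich $\text{DATALOG}^{*}$ inside a fragment that is \emph{already known} to be a proper part of $\text{FO(LFP)}$, and then invoke Theorem \ref{thm:S-DALG^r=FO(LFP)} to identify $\text{FO(LFP)}$ with $\text{DATALOG}^{r}$. This turns the properness claim into a consequence of the Kolaitis and Ebbinghaus--Flum characterizations cited just before the statement.

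The key step is a short chain of inclusions and equivalences. First I would observe that a single $\text{DATALOG}^{*}$ program is, trivially, an S-DATALOG$^{*}$ program: it is the degenerate sequence $\Pi_{0}$ of length one in Definition \ref{def:S-DATALOG} with every rule allowed to carry first-order formulas over extensional symbols. Hence $\text{DATALOG}^{*}\subseteq\text{S-DATALOG}^{*}$. Next, by Proposition \ref{pro:S-DALG=S-DALG^*} we have $\text{S-DATALOG}^{*}\equiv\text{S-DATALOG}$, so $\text{DATALOG}^{*}\subseteq\text{S-DATALOG}$. By the results of Kolaitis and of Ebbinghaus and Flum quoted above, $\text{S-DATALOG}$ is equivalent to EFP and to BFP, and both EFP and BFP are \emph{proper} subsets of $\text{FO(LFP)}$. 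Finally, Theorem \ref{thm:S-DALG^r=FO(LFP)} gives $\text{FO(LFP)}\equiv\text{DATALOG}^{r}$. Stringing these together yields $\text{DATALOG}^{*}\subseteq\text{S-DATALOG}\equiv\text{EFP}\subset\text{FO(LFP)}\equiv\text{DATALOG}^{r}$.

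From this chain the conclusion is immediate: the entire logic $\text{DATALOG}^{*}$ is contained in EFP, which sits strictly inside $\text{FO(LFP)}=\text{DATALOG}^{r}$, so any property witnessing the strict containment $\text{EFP}\subset\text{FO(LFP)}$ is definable in $\text{DATALOG}^{r}$ but not in $\text{DATALOG}^{*}$. Combined with the inclusion from Corollary \ref{cor:D^*D^rD^*r}, this gives $\text{DATALOG}^{*}\subset\text{DATALOG}^{r}$. There is no genuine technical obstacle internal to the argument; all the difficult separation work is externalized to the cited theorems, which supply the strictness EFP/BFP $\subsetneq\text{FO(LFP)}$. The only point demanding care is the first link: one must pass through S-DATALOG$^{*}$ (rather than S-DATALOG directly) so that the first-order side conditions of a $\text{DATALOG}^{*}$ program are accommodated, and only then appeal to Proposition \ref{pro:S-DALG=S-DALG^*} to land in plain S-DATALOG where the Kolaitis/Ebbinghaus--Flum results apply.
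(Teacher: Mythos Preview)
Your argument is correct and follows essentially the same route as the paper: the paper derives the corollary from Proposition~\ref{pro:S-DALG=S-DALG^*} and Theorem~\ref{thm:S-DALG^r=FO(LFP)} together with the Kolaitis and Ebbinghaus--Flum results that S-DATALOG coincides with EFP/BFP and that these are proper sublogics of FO(LFP). Your only addition is to make explicit the trivial link $\text{DATALOG}^{*}\subseteq\text{S-DATALOG}^{*}$, which the paper leaves implicit.
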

It is natural to try to relate Horn logic to Logic programs. First
we prove a lemma that shows a property of SO-HORN formulas. We say
a formula $\phi$ is closed under substructures if for any structure
$\mathcal{A}$ and any $\mathcal{B}\subseteq\mathcal{A}$, $\mathcal{A}\models\phi$
implies $\mathcal{B}\models\phi$. A formula $\phi$ is closed under
extensions if for any structure $\mathcal{A}$ and any $\mathcal{A}\subseteq\mathcal{B}$,
$\mathcal{A}\models\phi$ implies $\mathcal{B}\models\phi$. The following
lemma shows that adding a second-order quantifier does not change
these preservation  properties of a formula.
\begin{lemma}
\label{lem:clo_under_subsruc}If $\phi$ is a second-order formula
closed under substructures (extensions), then both $\exists R\phi$
and $\forall R\phi$ are closed under substructures (extensions).\end{lemma}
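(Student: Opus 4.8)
The plan is to unwind the definition of second-order satisfaction and push the closure property of $\phi$ down through the single quantifier on $R$, treating $R$ (say of arity $r$) as a genuine symbol of the vocabulary over which models of $\phi$ are taken. Thus closure of $\phi$ under substructures means: whenever $(\mathcal{A},R_0)$ is an expansion and $(\mathcal{B},S_0)\subseteq(\mathcal{A},R_0)$ is a substructure in the vocabulary $\tau\cup\{R\}$, then $(\mathcal{A},R_0)\models\phi$ forces $(\mathcal{B},S_0)\models\phi$; and a substructure in this expanded vocabulary is exactly one with $B\subseteq A$, the old relations and constants restricted, and $S_0=R_0\cap B^{r}$. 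There are four cases to check, namely $\exists R$ and $\forall R$, each against substructures and against extensions, and I will exploit that the substructure and extension cases are mirror images of one another.

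For the existential formula under substructures, first I would take a $\tau$-structure $\mathcal{A}\models\exists R\phi$ and a substructure $\mathcal{B}\subseteq\mathcal{A}$. Fixing a witness $R_0\subseteq A^{r}$ with $(\mathcal{A},R_0)\models\phi$, I set $S_0=R_0\cap B^{r}$; then $(\mathcal{B},S_0)\subseteq(\mathcal{A},R_0)$ by construction, so closure of $\phi$ gives $(\mathcal{B},S_0)\models\phi$ and hence $\mathcal{B}\models\exists R\phi$. For $\exists R\phi$ under extensions, given $\mathcal{A}\subseteq\mathcal{B}$ with witness $R_0\subseteq A^{r}$, I would simply reuse $R_0$ as a relation on $B$ (legitimate since $A^{r}\subseteq B^{r}$ and $R_0\cap A^{r}=R_0$); then $(\mathcal{A},R_0)\subseteq(\mathcal{B},R_0)$, and closure under extensions yields $(\mathcal{B},R_0)\models\phi$, so $\mathcal{B}\models\exists R\phi$.

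The universal cases are dual. For $\forall R\phi$ under substructures, with $\mathcal{B}\subseteq\mathcal{A}$, I would take an arbitrary $S\subseteq B^{r}$ and view it as a relation on $A$ (using $B^{r}\subseteq A^{r}$); the hypothesis $\mathcal{A}\models\forall R\phi$ gives $(\mathcal{A},S)\models\phi$, and since $S\cap B^{r}=S$ we have $(\mathcal{B},S)\subseteq(\mathcal{A},S)$, whence $(\mathcal{B},S)\models\phi$; as $S$ was arbitrary, $\mathcal{B}\models\forall R\phi$. For $\forall R\phi$ under extensions, with $\mathcal{A}\subseteq\mathcal{B}$ and arbitrary $S\subseteq B^{r}$, I would set $R_0=S\cap A^{r}$, so that $(\mathcal{A},R_0)\models\phi$ and $(\mathcal{A},R_0)\subseteq(\mathcal{B},S)$ because $S\cap A^{r}=R_0$; closure under extensions then gives $(\mathcal{B},S)\models\phi$, and again arbitrariness of $S$ finishes the case.

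There is no deep obstacle here; the only point that needs care, and the place where a sloppy argument could go wrong, is verifying in each case that the chosen pair of expansions really is a substructure/extension pair in the enlarged vocabulary $\tau\cup\{R\}$, that is, that the $R$-component of the smaller structure is precisely the restriction of the $R$-component of the larger one. Each of the four choices above, namely $S_0=R_0\cap B^{r}$, reusing $R_0$, reusing $S$, and $R_0=S\cap A^{r}$, is engineered exactly so that this compatibility condition holds, after which the hypothesis on $\phi$ applies verbatim. The old relations and constants play no further role beyond the standard fact that they are already correctly restricted in any $\tau$-substructure.
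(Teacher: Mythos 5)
Your proof is correct and follows essentially the same route as the paper's: in each of the four cases you choose exactly the same expansion (restricting the witness $R_0$ to $B^r$ for $\exists R\phi$ under substructures, reusing $R_0$ under extensions, viewing an arbitrary $S\subseteq B^r$ as a subset of $A^r$ for $\forall R\phi$ under substructures, and intersecting with $A^r$ under extensions) and then invokes the closure hypothesis on $\phi$. Your explicit remark that one must verify the substructure/extension relation in the expanded vocabulary $\tau\cup\{R\}$ is a point the paper leaves implicit, but it is the same argument.
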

\begin{proof}
Given a structure $\mathcal{A}$ and an $r$-ary relation $R$ over
$\mathcal{A}$, if $\mathcal{B}\subseteq\mathcal{A}$, then we use
$R\cap B^{r}$ to denote the reduct of $R$ to $\mathcal{B}$. Suppose
that $\phi$ is closed under substructures, then for any structure
$\mathcal{A}$ and $\mathcal{B}\subseteq\mathcal{A}$,
\[
\begin{array}{ccl}
\mathcal{A}\models\exists R\phi & \Rightarrow & \textrm{there exists a }R\subseteq A^{r}\\
 &  & \textrm{such that }(\mathcal{A},R)\models\phi\\
 & \Rightarrow & (\mathcal{B},R\cap B^{r})\models\phi\\
 &  & (\phi\textrm{ is closed under substructures})\\
 & \Rightarrow & \mathcal{B}\models\exists R\phi.
\end{array}
\]

\[
\begin{array}{ccl}
\mathcal{A}\models\forall R\phi & \Rightarrow & (\mathcal{A},R)\models\phi\textrm{ for any }R\subseteq A^{r}\\
 & \Rightarrow & (\mathcal{B},R\cap B^{r})\models\phi\textrm{ for any }R\subseteq A^{r}\\
 &  & (\phi\textrm{ is closed under substructures})\\
 & \Rightarrow & (\mathcal{B},R')\models\phi\textrm{ for any }R'\subseteq B^{r}\\
 & \Rightarrow & \mathcal{B}\models\forall R\phi.
\end{array}
\]
 If $\phi$ is closed under extensions, then for any structure $\mathcal{A}$
and $\mathcal{A}\subseteq\mathcal{B}$
\[
\begin{array}{ccl}
\mathcal{A}\models\exists R\phi & \Rightarrow & \textrm{there exists a }R\subseteq A^{r}\\
 &  & \textrm{such that }(A,R)\models\phi\\
 & \Rightarrow & (\mathcal{B},R)\models\phi\\
 &  & (\phi\textrm{ is closed under extensions})\\
 & \Rightarrow & \mathcal{B}\models\exists R\phi.
\end{array}
\]

\[
\begin{array}{ccl}
\mathcal{A}\models\forall R\phi & \Rightarrow & (\mathcal{A},R)\models\phi\textrm{ for any }R\subseteq A^{r}\\
 & \Rightarrow & (\mathcal{A},R'\cap A^{r})\models\phi\textrm{ for any }R'\subseteq B^{r}\\
 & \Rightarrow & (\mathcal{B},R')\models\phi\textrm{ for any }R'\subseteq B^{r}\\
 &  & (\phi\textrm{ is closed under extensions})\\
 & \Rightarrow & \mathcal{B}\models\forall R\phi.
\end{array}
\]
\end{proof}
\begin{corollary}
SO-HORN is closed under substructures.\end{corollary}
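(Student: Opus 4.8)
The plan is to combine the classical preservation property of universal first-order formulas with Lemma \ref{lem:clo_under_subsruc}. Recall that an arbitrary SO-HORN sentence has the shape $Q_1 R_1 \cdots Q_m R_m\,\psi$, where the matrix $\psi = \forall\bar{x}(C_1 \wedge \cdots \wedge C_n)$ and each $C_j$ is a Horn clause $\alpha_1 \wedge \cdots \wedge \alpha_l \wedge \beta_1 \wedge \cdots \wedge \beta_q \rightarrow H$. The crucial observation is that, once we regard $R_1, \ldots, R_m$ as ordinary relation symbols of an expanded vocabulary $\tau' = \tau \cup \{R_1, \ldots, R_m\}$, the matrix $\psi$ becomes an ordinary first-order formula over $\tau'$, and in fact a \emph{universal} one: by Definition \ref{def:SO-HORN} each $\alpha_s$ is an atom $R_i\bar{x}$, each $\beta_t$ is an atom $P\bar{y}$ or a negated atom $\neg P\bar{y}$, and $H$ is an atom $R_k\bar{z}$ or the constant $\bot$, so the body following $\forall\bar{x}$ is quantifier-free over $\tau'$.

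First I would establish that $\psi$, viewed over $\tau'$, is closed under substructures. This is the standard preservation fact for universal first-order sentences: if $(\mathcal{A}, R_1, \ldots, R_m) \models \forall\bar{x}\,\theta$ with $\theta$ quantifier-free, and $(\mathcal{B}, R_1 \cap B^{r_1}, \ldots) \subseteq (\mathcal{A}, R_1, \ldots)$, then for every tuple $\bar{b}$ from $B$ that tuple also lies in $A$, so $(\mathcal{A}, \ldots) \models \theta[\bar{b}]$; since quantifier-free formulas take the same truth value in a structure and in any of its substructures on tuples lying in the substructure, we get $(\mathcal{B}, \ldots) \models \theta[\bar{b}]$, and as $\bar{b}$ was arbitrary, $(\mathcal{B}, \ldots) \models \forall\bar{x}\,\theta$.

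With the matrix handled, I would lift the conclusion through the second-order prefix by induction on $m$, peeling off the quantifiers from the inside out. Lemma \ref{lem:clo_under_subsruc} states precisely that closure under substructures is preserved when one prepends either $\exists R$ or $\forall R$. Applying it to $\psi$, now with $R_m$ quantified and $R_1, \ldots, R_{m-1}$ treated as free relation symbols of the ambient signature, shows that $Q_m R_m\,\psi$ is closed under substructures over $\tau \cup \{R_1, \ldots, R_{m-1}\}$; repeating the step for $R_{m-1}, \ldots, R_1$ yields that $Q_1 R_1 \cdots Q_m R_m\,\psi$ is closed under substructures over $\tau$, which is the assertion of the corollary.

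I do not expect a genuine obstacle here: the whole argument rests on recognizing the matrix as a universal first-order formula over the expanded vocabulary and then invoking the lemma. The only point demanding care is the bookkeeping in the induction, namely ensuring that at each stage the relation symbols not yet quantified are correctly absorbed into the ambient signature, so that every application of Lemma \ref{lem:clo_under_subsruc} is made to a formula already known to be closed under substructures over exactly that signature.
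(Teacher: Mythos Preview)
Your proposal is correct and follows essentially the same route as the paper: observe that the first-order matrix $\forall\bar{x}(C_1\wedge\cdots\wedge C_n)$ is a universal sentence over the expanded vocabulary and hence preserved under substructures, then peel off the second-order quantifiers one by one via Lemma~\ref{lem:clo_under_subsruc}. The paper's own proof is just a two-line version of exactly this argument.
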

\begin{proof}
It's easily seen that for a SO-HORN formula $\Phi=Q_{1}R_{1}\cdots Q_{n}R_{n}\forall\bar{x}\phi$
where $Q_{i}\in\{\forall,\exists\}\,(1\leq i\leq n)$, the first-order
part $\forall\bar{x}\phi$ is closed under substructures. By Lemma
\ref{lem:clo_under_subsruc}, $\Phi$ is closed under substructures.
\end{proof}
For any formula $\phi$, $\phi$ is closed under substructures if
and only if $\neg\phi$ is closed under extensions. M. Ajtai and Y.
Gurevich\cite{ajtai} mentioned that DATALOG formulas are closed under
extensions. The following proposition implies that, in a sense, DATALOG
is equivalent to the negation of SO-HORN.
\begin{proposition}
\label{pro:HORN=NegDATALOG}For every SO-HORN ( SO-HORN$^{*}$,
SO-HORN$^{r}$, SO-HORN$^{*r}$, respectively ) formula $\phi(\bar{x})$
over vocabulary $\sigma$ we can find a DATALOG ( DATALOG$^{*}$,
DATALOG$^{r}$, DATALOG$^{*r}$, respectively ) formula $(\Pi,P)\bar{t}$
over vocabulary $\tau$ such that $\sigma=(\tau,\Pi)_{ext}$ and the
following are satisfied:
\begin{itemize}
\item If $\phi$ is a sentence, then $P$ is a zero-ary intentional relation
symbol such that for any $(\tau,\Pi)_{ext}$-structure $\mathcal{A}$
\[
\begin{array}{cccc}
\mathcal{A}\models\phi & \textrm{iff} & \mathcal{A}\nvDash(\Pi,P) & \textrm{i.e. }P^{\mathcal{A}[\Pi]}\textrm{ is FALSE}\end{array}
\]

\item If $\phi(\bar{x})$ is a formula with free variables $\bar{x}=x_{1},\cdots,x_{r}$,
then $P$ is an $r$-ary intentional relation symbol such that for
any $(\tau,\Pi)_{ext}$-structure $\mathcal{A}$ and any $\bar{a}=a_{1},\cdots,a_{r}\in A$
\[
\begin{array}{cccc}
\mathcal{A}\models\phi[\bar{a}] & \textrm{iff} & \mathcal{A}\nvDash(\Pi,P)[\bar{a}] & \textrm{i.e. }(a_{1},\cdots,a_{r})\notin P^{\mathcal{A}[\Pi]}\end{array}
\]

\end{itemize}

The converse of this statement is also true.

\end{proposition}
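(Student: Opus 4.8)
The plan is to exploit the \emph{least model} characterization of Horn satisfiability and to recognize that this least model is exactly the object computed by the corresponding DATALOG fixed-point iteration. By the Corollary establishing collapse to existential fragments, I may assume $\phi$ has the shape $\exists R_1\cdots\exists R_m\forall\bar{y}\bigwedge_j C_j$, where each $C_j$ is an implication $\mathrm{body}_j\to H_j$ whose body consists of positive intentional atoms $\alpha_s$ (of the form $R_i\bar{x}$ or, in the $r$-variants, $\forall\bar{y}R_i\bar{y}\bar{z}$) together with the $\beta_t$'s (extensional literals, or extensional first-order formulas in the $*$-variants), and whose head $H_j$ is either an intentional atom $R_k\bar{z}$ or $\bot$. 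The crucial observation is that every body is \emph{monotone} in the intentional relations $R_1,\dots,R_m$: enlarging the $R_i$'s can only make an atom $R_i\bar{x}$, a guard $\forall\bar{y}R_i\bar{y}\bar{z}$, or any positive combination of these more true, while the $\beta_t$'s do not mention the $R_i$'s at all.

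First I would split the clauses into the \emph{rule clauses} (those with an intentional head $R_k\bar{z}$) and the \emph{constraint clauses} (those with head $\bot$), and prove the key lemma: $\exists\bar{R}\forall\bar{y}\bigwedge_j C_j$ holds on $\mathcal{A}$ if and only if the least model $M$ of the rule clauses (obtained by iterating the immediate-consequence operator from the empty relations, which converges by monotonicity on the finite lattice of relations) satisfies every constraint clause, i.e. no constraint body is true in $M$. The $\Leftarrow$ direction is immediate by taking $\bar{R}=M$; for $\Rightarrow$, any satisfying assignment $M'$ is in particular a model of the rule clauses, hence $M\subseteq M'$, and monotonicity propagates any true constraint body from $M$ up to $M'$, contradicting that $M'$ satisfies the constraints. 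Next I would define $\Pi$ by turning each rule clause $\mathrm{body}_j\to R_k\bar{z}$ into the DATALOG rule $R_k\bar{z}\leftarrow\mathrm{body}_j$ and each constraint clause $\mathrm{body}_j\to\bot$ into $P\leftarrow\mathrm{body}_j$, where $P$ is a fresh zero-ary symbol occurring only in heads. Because the translation leaves the body forms untouched, the $*$-bodies land in DATALOG$^{*}$ rules and the guards $\forall\bar{y}R_i\bar{y}\bar{z}$ land in DATALOG$^{r}$ rules, so each Horn variant maps to the matching DATALOG variant. The fixed point of $\Pi$ computes exactly $M$ (standard equivalence of the $T_P$-iteration with the rule-closure), and $P$ becomes TRUE precisely when some constraint body holds in $M$; combined with the key lemma this yields $\mathcal{A}\models\phi$ iff $P^{\mathcal{A}[\Pi]}$ is FALSE, i.e. $\mathcal{A}\nvDash(\Pi,P)$.

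For a formula $\phi(\bar{x})$ with free variables $\bar{x}=x_1,\dots,x_r$, I would \emph{thread $\bar{x}$ as parameters}: replace every intentional symbol $R_i$ by a fresh $\hat{R}_i$ carrying $\bar{x}$ as a suffix, rewriting each atom $R_i\bar{w}$ as $\hat{R}_i\bar{w}\bar{x}$ and each guard $\forall\bar{y}R_i\bar{y}\bar{z}$ as $\forall\bar{y}\hat{R}_i\bar{y}\bar{z}\bar{x}$ (the universal block stays a prefix, so the DATALOG$^{r}$ shape is preserved), and let the constraint clauses derive $P\bar{x}$ for the now $r$-ary $P$. Since $\bar{x}$ is passed through unchanged, the iteration decouples over parameter values and computes, for each $\bar{a}$, the least model of $\phi[\bar{a}]$; hence $\bar{a}\in P^{\mathcal{A}[\Pi]}$ iff $\phi[\bar{a}]$ is unsatisfiable, giving $\mathcal{A}\models\phi[\bar{a}]$ iff $\bar{a}\notin P^{\mathcal{A}[\Pi]}$.

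Finally, for the converse I would reverse the construction: given $(\Pi,P)\bar{t}$, turn each DATALOG rule $\mathrm{head}\leftarrow\mathrm{body}$ into the implication clause $\mathrm{body}\to\mathrm{head}$, add the single constraint clause $P\bar{x}\to\bot$, and existentially quantify all intentional symbols, using the identity $\forall\bar{w}(\mathrm{body}\to\mathrm{head})\equiv(\exists\bar{w}\,\mathrm{body}\to\mathrm{head})$ for body-only variables $\bar{w}$ to see that models of the clause are exactly the relations closed under the rule. Because $P_{(\infty)}$ is the intersection of all such models, $\bar{a}\notin P_{(\infty)}$ iff some model omits $\bar{a}$ from $P$ iff the augmented Horn formula is satisfiable at $\bar{a}$, which reproduces the desired negation. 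I expect the main obstacle to be the key lemma --- pinning down that the second-order existential over Horn clauses selects the least model and that it coincides with the DATALOG fixed point under all four body regimes --- whereas the parametrization and the converse are then essentially bookkeeping.
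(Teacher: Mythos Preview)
Your proposal is correct and follows essentially the same construction as the paper: translate rule clauses into DATALOG rules, constraint clauses into rules deriving $P$, thread the free variables $\bar{x}$ as extra argument positions on the intentional predicates, and reverse the construction for the converse. The paper simply asserts the correctness of the translation (``From the semantics of DATALOG program and SO-HORN logic we know that\ldots''), whereas you spell out the least-model justification explicitly; the paper's only additional bookkeeping is the use of fresh variables $\bar{v}$ with guards $\bar{x}=\bar{v}$ when parametrizing, and a dummy rule $P\bar{v}\leftarrow P\bar{v}$ in the degenerate case where $\phi$ has no $\bot$-headed clause.
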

\begin{proof}
We give the proof only for the case of DATALOG and SO-HORN, the others
follow by the same method. For every SO-HORN formula $\phi(\bar{x})$
we construct a DATALOG formula $(\Pi,P)\bar{t}$ which is equivalent
to $\neg\phi(\bar{x})$. Suppose $\phi(\bar{x})$ has the form
\[
\exists R_{1}\cdots\exists R_{n}\forall x_{1}\cdots\forall x_{m}\left(\overset{l}{\underset{i=1}{\bigwedge}}(\alpha_{i1}\wedge\cdots\wedge\alpha_{is_{i}}\rightarrow H_{i})\right)(\bar{x})
\]
where $\bar{x}=x_{1},\cdots,x_{r}$ are the free variables in $\phi$.
Suppose $\bar{v}=v_{1},\cdots,v_{r}$ are the variables that don't
occur in $\alpha$, we use
\[
\alpha'=\alpha[R{}_{1}\bar{u}_{1}/R{}_{1}'\bar{u}_{1}\bar{v},\cdots,R{}_{n}\bar{u}_{n}/R{}_{n}'\bar{u}_{n}\bar{v}]
\]
to denote the formula obtained by replacing each $R_{i}\bar{u}_{i}$
by $R_{i}'\bar{u}_{i}\bar{v}\,(1\leq i\leq n)$, respectively. We
construct $\Pi$ as following:
\begin{itemize}
\item for each clause $\alpha_{i1}\wedge\cdots\wedge\alpha_{is_{i}}\rightarrow H_{i}$
in $\phi$,

\begin{itemize}
\item if $H_{i}$ is an atomic formula $R_{j}\bar{u}$, we add the following
rule in $\Pi$,
\[
R_{j}'\bar{u}\bar{v}\leftarrow\bar{x}=\bar{v},\alpha'_{i1},\cdots,\alpha'_{is_{i}}
\]

\item if $H_{i}$ is the Boolean constant $\bot$, we add the following
rule in $\Pi$,
\[
P\bar{v}\leftarrow\bar{x}=\bar{v},\alpha'_{i1},\cdots,\alpha'_{is_{i}}
\]

\end{itemize}
\item if no $H_{i}$ is the Boolean constant $\bot$ in $\phi$, we add
the following rule in $\Pi$,
\[
P\bar{v}\leftarrow P\bar{v}
\]

\end{itemize}

It's easily seen that if $\phi$ is a sentence then $P$ is a zero-ary
intentional relation symbol in $\Pi$. From the semantics of DATALOG
program and SO-HORN logic we know that for any structure $\mathcal{A}$
and $\bar{a}=a_{1},\cdots,a_{r}\in A$, $\mathcal{A}\models\phi[\bar{a}]$
iff $\mathcal{A}\nvDash(\Pi,P)[\bar{a}]$.

For the converse direction, given a DATALOG formula $(\Pi,P)\bar{t}$,
we construct a SO-HORN formula $\phi(\bar{x})$ such that $(\Pi,P)\bar{t}$
and $\neg\phi(\bar{x})$ are equivalent. Remember that we assume no
zero-ary relation symbol in the body of any rule of $\Pi$, so we
can ignore the rules $Q\leftarrow\alpha_{1},\cdots,\alpha_{l}$ in
$\Pi$ where $Q$ is a zero-ary relation symbol and $Q\neq P$ in
the following construction with no result affected. Let $R_{1},\cdots,R_{n}$
be the all no zero-ary intentional relation symbols and $z_{1},\cdots,z_{m}$
be the all free variables in $\Pi$. The prefix of $\phi(\bar{x})$
is $\exists R_{1}\cdots\exists R_{n}\forall z_{1}\cdots\forall z_{m}$,
the matrix of $\phi(\bar{x})$ is obtained by
\begin{itemize}
\item if $P$ is a zero-ary relation symbol,

\begin{itemize}
\item if $P\leftarrow\alpha_{1},\cdots,\alpha_{l}$ is a rule of $\Pi$,
then we add the clause $\alpha_{1}\wedge\cdots\wedge\alpha_{l}\rightarrow\bot$
as a conjunct,
\item if $R_{i}\bar{u}\leftarrow\alpha_{1},\cdots,\alpha_{l}$ is a rule
of $\Pi$, then we add the clause $\alpha_{1}\wedge\cdots\wedge\alpha_{l}\rightarrow R_{i}\bar{u}$
as a conjunct,
\end{itemize}
\item if $P$ is an $r$-ary $(r>0)$ relation symbol,

\begin{itemize}
\item if $P\bar{u}\leftarrow\alpha_{1},\cdots,\alpha_{l}$ is a rule of
$\Pi$, then let $\bar{x}=x_{1},\cdots,x_{r}$ be new variables that
don't occur in $\Pi$, we add $\alpha_{1}\wedge\cdots\wedge\alpha_{l}\rightarrow P\bar{u}$
and $P\bar{x}\rightarrow\bot$ as a conjuncts,
\item if $R_{i}\bar{u}\leftarrow\alpha_{1},\cdots,\alpha_{l}$ is a rule
of $\Pi$ and $R_{i}\neq P$, then we add the clause $\alpha_{1}\wedge\cdots\wedge\alpha_{l}\rightarrow R_{i}\bar{u}$
as a conjunct.
\end{itemize}
\end{itemize}

This completes the proof.

\end{proof}
From the above proposition and Corollary \ref{cor:D^*D^rD^*r} we
can conclude the following corollary.
\begin{corollary}
SO-HORN$^{*}\subseteq$ SO-HORN$^{r}\equiv$ SO-HORN$^{*r}$ \end{corollary}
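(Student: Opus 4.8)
The plan is to transfer the chain DATALOG$^{*}\subseteq$ DATALOG$^{r}\equiv$ DATALOG$^{*r}$ of Corollary \ref{cor:D^*D^rD^*r} across the correspondence of Proposition \ref{pro:HORN=NegDATALOG}, exploiting that this correspondence runs in both directions and that negation is an involution. Recall that for each $X\in\{{*},{r},{*r}\}$, Proposition \ref{pro:HORN=NegDATALOG} says SO-HORN$^{X}$ and DATALOG$^{X}$ are related by complementation: for every SO-HORN$^{X}$ formula $\phi(\bar{x})$ over $\sigma$ there is a DATALOG$^{X}$ formula $(\Pi,P)\bar{t}$ with $\sigma=(\tau,\Pi)_{ext}$ such that $\mathcal{A}\models\phi[\bar{a}]$ iff $\mathcal{A}\nvDash(\Pi,P)[\bar{a}]$, and conversely every DATALOG$^{X}$ formula arises this way from some SO-HORN$^{X}$ formula. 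Thus SO-HORN$^{X}$ defines exactly the complements of the properties defined by DATALOG$^{X}$, and since complementation is monotone under $\subseteq$ and preserves $\equiv$, inclusions and equivalences between the DATALOG variants lift directly to their Horn counterparts.

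First I would establish SO-HORN$^{*}\subseteq$ SO-HORN$^{r}$. Take an arbitrary SO-HORN$^{*}$ formula $\phi(\bar{x})$ over $\sigma$. Apply the forward direction of Proposition \ref{pro:HORN=NegDATALOG} to obtain a DATALOG$^{*}$ formula $(\Pi,P)\bar{t}$ with $\sigma=(\tau,\Pi)_{ext}$ whose negation is $\phi$. By Corollary \ref{cor:D^*D^rD^*r} there is a DATALOG$^{r}$ formula $(\Pi',P')\bar{t}$ equivalent to $(\Pi,P)\bar{t}$; inspecting the construction in that corollary, the extensional vocabulary is left untouched, so $(\tau',\Pi')_{ext}=\sigma$ as well. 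Finally apply the converse direction of Proposition \ref{pro:HORN=NegDATALOG} to $(\Pi',P')\bar{t}$ to obtain an SO-HORN$^{r}$ formula $\psi(\bar{x})$ whose negation is $(\Pi',P')\bar{t}$. Chaining the three equivalences gives
\[
\mathcal{A}\models\phi[\bar{a}]\ \Longleftrightarrow\ \mathcal{A}\nvDash(\Pi,P)[\bar{a}]\ \Longleftrightarrow\ \mathcal{A}\nvDash(\Pi',P')[\bar{a}]\ \Longleftrightarrow\ \mathcal{A}\models\psi[\bar{a}],
\]
so $\phi\equiv\psi$, witnessing SO-HORN$^{*}\subseteq$ SO-HORN$^{r}$.

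For the equivalence SO-HORN$^{r}\equiv$ SO-HORN$^{*r}$ I would run the identical argument in both directions, this time feeding DATALOG$^{r}\equiv$ DATALOG$^{*r}$ through the correspondence. In one direction, start from an SO-HORN$^{r}$ formula, pass to its equivalent DATALOG$^{r}$ formula, replace that by an equivalent DATALOG$^{*r}$ formula, and return via the converse direction to an SO-HORN$^{*r}$ formula; the reverse direction is symmetric, using DATALOG$^{*r}\subseteq$ DATALOG$^{r}$. Because each round trip composes the negation with itself, the double negation cancels and the endpoints are genuinely equivalent, not merely complementary.

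I expect the only delicate point to be bookkeeping rather than mathematics: one must verify that the extensional vocabulary $\sigma=(\tau,\Pi)_{ext}$ is preserved at every hop, so that both Horn formulas speak about the same vocabulary, and that the arity of the output relation $P$ matches the number of free variables of $\phi$ throughout (in particular $P$ is zero-ary exactly when $\phi$ is a sentence), so that the double application of Proposition \ref{pro:HORN=NegDATALOG} really returns a formula with the same free variables. Both conditions are guaranteed by the explicit constructions cited above, so no new idea is needed beyond the observation that complementation is monotone and involutive.
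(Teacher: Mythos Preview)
Your proposal is correct and follows essentially the same route as the paper: translate via Proposition~\ref{pro:HORN=NegDATALOG} to the DATALOG side, apply Corollary~\ref{cor:D^*D^rD^*r}, and translate back. The only minor difference is that for SO-HORN$^{r}\subseteq$ SO-HORN$^{*r}$ you run the full round trip through DATALOG, whereas this inclusion is immediate from the definitions (every SO-HORN$^{r}$ formula is literally an SO-HORN$^{*r}$ formula), and the paper correspondingly only argues the nontrivial direction SO-HORN$^{*r}\subseteq$ SO-HORN$^{r}$.
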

\begin{proof}
We show that every SO-HORN$^{*}$ (SO-HORN$^{*r}$) formula is equivalent
to a SO-HORN$^{r}$ formula.

$\phi(\bar{x})$ is a SO-HORN$^{*}$ (SO-HORN$^{*r}$) formula,

$\Rightarrow$ there exists a DATALOG$^{*}$ (DATALOG$^{*r}$) formula
$\ensuremath{(\Pi,P)\bar{t}}$ such that $(\Pi,P)\bar{t}$ and $\ensuremath{\neg\phi(\bar{x})}$
are equivalent (by Proposition \ref{pro:HORN=NegDATALOG}),

$\Rightarrow$ there exists a DATALOG$^{r}$ formula $\ensuremath{(\Pi',P')\bar{t}}$
such that $(\Pi',P')\bar{t}$ and $\ensuremath{\neg\phi(\bar{x})}$
are equivalent (by Corollary \ref{cor:D^*D^rD^*r}),

$\Rightarrow$ there exists a SO-HORN$^{r}$ formula $\ensuremath{\psi(\bar{x})}$
such that $\ensuremath{\psi(\bar{x})}$ and $\phi(\bar{x})$ are equivalent
(by Proposition \ref{pro:HORN=NegDATALOG}).
\end{proof}
To compare the expressive power of the several Horn logic defined
above and DATALOG program, we define stratified versions of these
Horn logic in the following.
\begin{definition}
\label{def:SO-HORN_s}For each number $s\geq1$, SO-HORN$_{s}$ is
defined inductively by
\begin{itemize}
\item SO-HORN$_{1}=$ SO-HORN,
\item SO-HORN$_{j+1}$ is the set of formulas of the form
\[
Q_{1}R_{1}\cdots Q_{m}R_{m}\forall\bar{x}(C_{1}\wedge\cdots\wedge C_{n})
\]
 where $Q_{i}\in\{\forall,\exists\}$, $R_{1},\cdots,R_{m}$ are relation
symbols and each $C_{j}$ is an implication of the form
\[
\alpha_{1}\wedge\cdots\wedge\alpha_{l}\wedge\beta_{1}\wedge\cdots\wedge\beta_{q}\rightarrow H
\]
 where

\begin{description}
\item [{1)}] each $\alpha_{s}$ is an atomic formula $R_{i}\bar{x}$,
\item [{2)}] each $\beta_{t}$ is either a SO-HORN$_{l}$ formula $\phi(\bar{x})$
or its negation$\neg\phi(\bar{x})$ where $l\leq j$ and $R_{1},\cdots,R_{m}$
don't occur in $\phi$,
\item [{3)}] $H$ is either a atomic formula $R_{k}\overline{z}$ or the
Boolean constant $\bot$ (for false).
\end{description}
\end{itemize}

Set SO-HORN$_{\infty}=\underset{s\geq1}{\bigcup}$SO-HORN$_{s}$.

If we replace SO-HORN by SO-HORN$^{*}$, we denote this logic by SO-HORN$_{\infty}^{*}$.
And if we replace SO-HORN by SO-HORN$^{r}$, and replace condition
\textbf{1)} by
\begin{description}
\item [{1')}] each $\alpha_{s}$ is either an atomic formula $R_{i}\bar{x}$
or $\forall\bar{y}R_{i}\bar{y}\bar{z}$, we denote this logic by SO-HORN$_{\infty}^{r}$.
\end{description}
\end{definition}
\begin{proposition}
SO-HORN$_{\infty}$, SO-HORN$_{\infty}^{*}$, SO-HORN$_{\infty}^{r}$
are equivalent to S-DATALOG, S-DATALOG$^{*}$, S-DATALOG$^{r}$, respectively. \end{proposition}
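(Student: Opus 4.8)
The plan is to prove, by induction on the stratification level $s$, a level-by-level refinement of Proposition \ref{pro:HORN=NegDATALOG}: for every $s\geq 1$, a query is defined by a SO-HORN$_{s}$ formula if and only if its complement is defined by an S-DATALOG formula whose underlying sequence has at most $s$ strata (and likewise for the starred and revised variants, invoking the corresponding cases of Proposition \ref{pro:HORN=NegDATALOG}). The base case $s=1$ is exactly Proposition \ref{pro:HORN=NegDATALOG}, since SO-HORN$_{1}=$ SO-HORN by Definition \ref{def:SO-HORN_s} and a one-stratum S-DATALOG program is a DATALOG program.

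For the inductive step, consider first the direction SO-HORN$_{s+1}\rightarrow$ S-DATALOG. Given $\phi=Q_{1}R_{1}\cdots Q_{m}R_{m}\forall\bar{x}(C_{1}\wedge\cdots\wedge C_{n})$, each side literal $\beta_{t}$ of a clause is, by condition 2) of Definition \ref{def:SO-HORN_s}, a SO-HORN$_{l}$ formula $\psi$ with $l\leq s$ or its negation, and $\psi$ mentions none of $R_{1},\dots,R_{m}$. By the induction hypothesis each such $\psi$ is equivalent to $\neg(\Sigma_{\psi},P_{\psi})$ for an S-DATALOG formula of at most $s$ strata; laying these programs out level by level at the bottom (renaming intentional symbols apart, since they all read only the original extensional symbols $\sigma$) produces the lower $s$ strata, in which the relation computed for $P_{\psi}$ is the complement of the truth set of $\psi$. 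I then add a single top stratum built exactly as in the forward half of Proposition \ref{pro:HORN=NegDATALOG} from the Horn clauses $C_{1},\dots,C_{n}$, replacing each $\beta_{t}$ equal to $\psi$ by the negated atom $\neg P_{\psi}$ and each $\beta_{t}$ equal to $\neg\psi$ by the atom $P_{\psi}$. Since $P_{\psi}$ is intentional for a strictly lower stratum it is extensional for the top one, so the negated occurrences are legal; the resulting S-DATALOG formula has at most $s+1$ strata and defines the complement of $\phi$.

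For the reverse direction, given an S-DATALOG formula $(\Sigma,P)\bar t$ with $\Sigma=(\Pi_{0},\dots,\Pi_{s})$, I apply the converse half of Proposition \ref{pro:HORN=NegDATALOG} to the top DATALOG program $\Pi_{s}$ to obtain a SO-HORN formula $\phi_{s}$ over $\tau_{s-1}$ whose truth set is the complement of $(\Pi_{s},P)$ and whose side literals are atoms and negated atoms over $\tau_{s-1}$. Each relation symbol $R$ of $\tau_{s-1}$ that is intentional for the lower part $\Sigma'=(\Pi_{0},\dots,\Pi_{s-1})$ is, by the induction hypothesis applied to $(\Sigma',R)$, the complement of the truth set of a SO-HORN$_{s}$ formula $\theta_{R}$ not mentioning the fresh top-level symbols. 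I then replace in $\phi_{s}$ every positive extensional atom $R\bar y$ by $\neg\theta_{R}(\bar y)$ and every negated extensional atom $\neg R\bar y$ by $\theta_{R}(\bar y)$, leaving atoms over $\sigma$ untouched. Condition 2) of Definition \ref{def:SO-HORN_s} permits precisely these SO-HORN$_{\leq s}$ side literals and their negations, so the result is a SO-HORN$_{s+1}$ formula defining the complement of $(\Sigma,P)$, which closes the induction.

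Finally, both SO-HORN$_{\infty}$ and S-DATALOG are closed under complement — on the S-DATALOG side by appending a stratum with the single negation rule $\bar P\bar x\leftarrow\neg P\bar x$, and on the Horn side by the clause $\phi\rightarrow\bot$, which lies in SO-HORN$_{l+1}$ whenever $\phi\in$ SO-HORN$_{l}$ — so the level-indexed negation correspondence upgrades to an equality of expressive power, SO-HORN$_{\infty}\equiv$ S-DATALOG, and the identical argument with the starred and revised cases of Proposition \ref{pro:HORN=NegDATALOG} yields SO-HORN$_{\infty}^{*}\equiv$ S-DATALOG$^{*}$ and SO-HORN$_{\infty}^{r}\equiv$ S-DATALOG$^{r}$. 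The main obstacle is the bookkeeping: keeping the negation polarity consistent across the induction so that the substituted side literals always fall within what condition 2) of Definition \ref{def:SO-HORN_s} allows, and checking that combining the lower-stratum programs in parallel keeps the stratum count aligned with the SO-HORN nesting level.
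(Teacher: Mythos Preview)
Your proposal is correct and is essentially the same approach the paper has in mind: the paper's own proof consists of the single sentence ``The proof is similar to that of Proposition \ref{pro:HORN=NegDATALOG}. The details are left to the reader,'' and your induction on the stratification level, using Proposition \ref{pro:HORN=NegDATALOG} at every stage and then closing up under complement on both sides, is exactly the natural way to fill in those details. One small point worth making explicit when you write it up: Definition \ref{def:SO-HORN_s} allows arbitrary second-order prefixes $Q_{1}R_{1}\cdots Q_{m}R_{m}$, so before invoking the forward construction of Proposition \ref{pro:HORN=NegDATALOG} (which is stated for the existential fragment) you should appeal to the collapse result (Proposition 1 / its Corollary) to reduce to an existential prefix; since each $\beta_{t}$ is a second-order formula not containing $R_{1},\dots,R_{m}$, that collapse applies verbatim to every SO-HORN$_{s}$ level.
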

\begin{proof}
The proof is similar to that of Proposition \ref{pro:HORN=NegDATALOG}.
The details are left to the reader.\end{proof}
\begin{corollary}
SO-HORN$_{\infty}\equiv$ SO-HORN$_{\infty}^{*}\equiv$ FO(BFP)
\end{corollary}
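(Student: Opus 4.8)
The plan is to reduce the corollary to a short chain of expressive-power equivalences, every link of which is already available: the characterization of the stratified Horn logics by stratified DATALOG programs given in the preceding proposition, the equivalence S-DATALOG $\equiv$ S-DATALOG$^{*}$ of Proposition~\ref{pro:S-DALG=S-DALG^*}, and the known equivalence of S-DATALOG with FO(BFP). No new construction is required; the work is entirely in assembling these facts in the correct order and checking that the directions of the reductions compose.

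First I would invoke the preceding proposition to obtain SO-HORN$_{\infty}\equiv$ S-DATALOG and SO-HORN$_{\infty}^{*}\equiv$ S-DATALOG$^{*}$. Then Proposition~\ref{pro:S-DALG=S-DALG^*} supplies S-DATALOG $\equiv$ S-DATALOG$^{*}$. Composing these three equivalences already shows that the four logics SO-HORN$_{\infty}$, SO-HORN$_{\infty}^{*}$, S-DATALOG, and S-DATALOG$^{*}$ all have the same expressive power. The one point deserving an explicit remark is that the correspondence between the Horn logics and the DATALOG programs is contravariant: exactly as in Proposition~\ref{pro:HORN=NegDATALOG}, each SO-HORN$_{\infty}$ formula is equivalent to the negation of an S-DATALOG formula and conversely. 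Hence the equality "$\equiv$" between SO-HORN$_{\infty}$ and S-DATALOG rests on the fact that S-DATALOG, being stratified, is closed under complementation; this closure is precisely what the preceding proposition packages, so I can cite it directly rather than re-prove it.

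Finally I would bring in the external ingredient: Ebbinghaus and Flum \cite{ef} proved that S-DATALOG is equivalent to FO(BFP). Appending this to the chain yields SO-HORN$_{\infty}\equiv$ S-DATALOG $\equiv$ FO(BFP) on the one side and SO-HORN$_{\infty}^{*}\equiv$ S-DATALOG$^{*}\equiv$ S-DATALOG $\equiv$ FO(BFP) on the other, which together give the claimed SO-HORN$_{\infty}\equiv$ SO-HORN$_{\infty}^{*}\equiv$ FO(BFP).

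The main obstacle does not lie in the derivation above, which is bookkeeping, but in the results being invoked. The genuinely substantial inputs are the Ebbinghaus--Flum theorem S-DATALOG $\equiv$ FO(BFP) and the proof of the preceding proposition (the stratified analogue of Proposition~\ref{pro:HORN=NegDATALOG}), whose careful treatment of the stratification and of the contravariant negation is where the real effort sits. Once those are granted, the corollary follows with no further difficulty, and I would present it simply as the composition of the stated equivalences.
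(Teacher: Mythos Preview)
Your proposal is correct and follows essentially the same approach as the paper. The paper does not give an explicit proof of this corollary; it is stated immediately after recalling the Ebbinghaus--Flum result S-DATALOG $\equiv$ FO(BFP) and is meant to follow by combining that with the preceding proposition (SO-HORN$_{\infty}\equiv$ S-DATALOG, SO-HORN$_{\infty}^{*}\equiv$ S-DATALOG$^{*}$) and Proposition~\ref{pro:S-DALG=S-DALG^*}, exactly as you outline.
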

From the results above we obtain our main theorem in the following.
\begin{theorem}
On all finite structures the followings are hold:
\[
\begin{array}{ll}
 & \text{SO-HORN}\subset\text{SO-HORN}^{*}\subset\text{SO-HORN}^{r}\\
\text{and }\\
 & \text{SO-HORN}^{r}\equiv\text{SO-HORN}^{*r}\equiv\text{SO-HORN}_{\infty}^{r}\equiv\text{FO(LFP)}
\end{array}
\]

\end{theorem}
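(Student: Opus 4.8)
The plan is to assemble the theorem from the duality between the Horn logics and the DATALOG logics (Proposition~\ref{pro:HORN=NegDATALOG}) together with the equivalences already proved on the DATALOG side. For the equality cluster I would first observe that Proposition~\ref{pro:HORN=NegDATALOG} identifies the classes definable in SO-HORN$^r$ with exactly the complements of the classes definable in DATALOG$^r$. Since Theorem~\ref{thm:S-DALG^r=FO(LFP)} gives DATALOG$^r \equiv$ FO(LFP) and FO(LFP) is closed under negation (being closed under all first-order connectives), taking complements yields SO-HORN$^r \equiv$ FO(LFP). The equivalence SO-HORN$^r \equiv$ SO-HORN$^{*r}$ is precisely the content of the corollary stating SO-HORN$^* \subseteq$ SO-HORN$^r \equiv$ SO-HORN$^{*r}$, and the proposition relating the stratified Horn logics to the stratified DATALOG logics gives SO-HORN$^r_\infty \equiv$ S-DATALOG$^r \equiv$ FO(LFP) by Theorem~\ref{thm:S-DALG^r=FO(LFP)} again. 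This settles the second line of the theorem.

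For the first line, the inclusions SO-HORN $\subseteq$ SO-HORN$^* \subseteq$ SO-HORN$^r$ are immediate from Definition~\ref{def:SO-HORN} (condition 2' subsumes condition 2) and the corollary just cited. To separate the first two, I would exhibit a SO-HORN$^*$-definable property that is not closed under substructures, which the corollary asserting that SO-HORN is closed under substructures forbids from SO-HORN. A convenient witness is $\exists x\,\neg Px$ over a vocabulary with one unary symbol $P$: its complement $\forall x\,Px$ is captured by the DATALOG$^*$ rule $Q \leftarrow \forall x\,Px$ (a first-order body over the extensional symbol $P$), so by the converse half of Proposition~\ref{pro:HORN=NegDATALOG} the property $\exists x\,\neg Px$ lies in SO-HORN$^*$; deleting the unique element witnessing $\neg Px$ destroys the property, so it is not closed under substructures and hence not in SO-HORN.

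To separate SO-HORN$^*$ from SO-HORN$^r$ I would transfer the strict containment DATALOG$^* \subset$ DATALOG$^r$ (the corollary following Theorem~\ref{thm:S-DALG^r=FO(LFP)}) through the duality. Because Proposition~\ref{pro:HORN=NegDATALOG} provides, in both directions, an exact correspondence between the Horn formulas and the negations of the matching DATALOG formulas, the SO-HORN$^*$-definable (resp.\ SO-HORN$^r$-definable) classes are precisely the complements of the DATALOG$^*$-definable (resp.\ DATALOG$^r$-definable) classes. Complementation is a bijection on classes of structures, so a DATALOG$^r$ formula with no equivalent DATALOG$^*$ formula yields, via its negation, a SO-HORN$^r$ formula with no equivalent SO-HORN$^*$ formula, giving SO-HORN$^* \subset$ SO-HORN$^r$.

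The main obstacle I anticipate lies not in the equalities, which are essentially bookkeeping over the duality, but in making the properness transfer airtight. It rests on the correspondence of Proposition~\ref{pro:HORN=NegDATALOG} being tight at the level of individual definable classes, so that inequivalence, and not merely containment, is preserved; verifying that complementation acts class-by-class bijectively between the two fragments, rather than only respecting the inclusion, is the delicate point I would check with care.
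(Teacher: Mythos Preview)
Your proposal is correct and follows essentially the same approach as the paper, which simply states that the theorem is obtained ``from the results above'' without spelling out the assembly; you have correctly identified and combined the relevant pieces (the Horn/DATALOG duality of Proposition~\ref{pro:HORN=NegDATALOG}, Theorem~\ref{thm:S-DALG^r=FO(LFP)}, the strict containment DATALOG$^{*}\subset$ DATALOG$^{r}$, closure of FO(LFP) under negation, and closure of SO-HORN under substructures). One cosmetic simplification: for the separating witness $\exists x\,\neg Px$ you need not route through DATALOG$^{*}$ at all, since $(\forall y\,Py)\rightarrow\bot$ is already a legal SO-HORN$^{*}$ clause and directly expresses $\exists y\,\neg Py$.
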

In \cite{gra} Gr\"{a}del asked that whether SO-HORN captures the class
of problems that are in P and closed under substructures. Using some
results above and a technique that encode a graph into a tree which
is exponentially larger than it, we show that if the answer is ``Yes'',
then the 3-COLORABILITY problem is in P, which implies that P=NP.
So it seems that the answer is not ``Yes'' for this question. We
ask that whether every problem that is expressible in FO(LFP) and
closed under substructures is expressible by a SO-HORN formula. This
question is equivalent to the question that whether every problem
that is expressible in DATALOG$^{r}$ and closed under extensions
is expressible by a DATALOG formula.

\section{Expressive power of Extended Horn Logic}

In \cite{zhao-1} we introduced SO-EHORN and SO-EHORN$^{r}$ logic and proved that
they capture co-NP on ordered structures. It is well known that $\Sigma_{1}^{1}$
captures NP\cite{fagin}, as a corollary, $\Pi_{1}^{1}$ captures
co-NP on all structures. So it is of interest to know whether SO-EHORN$^{r}$
captures co-NP on all structures. In this section, we show that SO-EHORN$^{r}\equiv\Pi_{1}^{1}$
which implies that SO-EHORN$^{r}$ captures co-NP on all structures.
\begin{definition}
\label{def:Second-order-Extended-Horn}Second-order Extended Horn
logic, denoted by SO-EHORN, is the set of second-order formulas of
the form
\[
\forall P_{1}\exists R_{1}\cdots\forall P_{m}\exists R_{m}\forall\bar{x}(C_{1}\wedge\cdots\wedge C_{n})
\]
 where $R_{1},\cdots,R_{m},P_{1},\cdots,P_{m}$ are relation symbols
and $C_{1},\cdots,C_{n}$ are extended Horn clauses with respect to
$R_{1},\cdots,R_{m}$, more precisely, each $C_{j}$ is an implication
of the form
\[
\alpha_{1}\wedge\cdots\wedge\alpha_{l}\wedge\beta_{1}\wedge\cdots\wedge\beta_{q}\rightarrow H
\]
 where
\begin{description}
\item [{1)}] each $\alpha_{s}$ is an atomic formula $R_{i}\bar{x}$,
\item [{2)}] each $\beta_{t}$ is either an atomic formula $Q\bar{y}$
or a negated atomic formula $\neg Q\bar{y}$ where $Q\notin\{R_{1},\cdots,R_{m}\}$,
\item [{3)}] $H$ is either an atomic formula $R_{k}\overline{z}$ or the
Boolean constant $\bot$ (for False).
\end{description}

If we replace condition \textbf{1)} by
\begin{description}
\item [{1')}] each $\alpha_{s}$ is either an atomic formula $R_{i}\overline{x}$
or $\forall\overline{y}R_{i}\overline{y}\overline{z}$,
\end{description}

then we call the logic second-order Extended revised Horn Logic, denoted
by SO-EHORN$^{r}$.

\end{definition}
By Lemma \ref{lem:clo_under_subsruc}, we know that SO-EHORN is closed substructures. So it can not captures co-NP on all structures.
But SO-EHORN$^{r}$ is more expressive, the following example shows that it can express some
co-NP complete problem.
\begin{example}
The decision problem for the set $\{\phi|\phi$ is an unsatisfiable
propositional CNF formula$\}$ is co-NP complete. We can encode $\phi$
via the structure $\mathcal{A}_{\phi}=\left\langle A,Cla,Var,P,N\right\rangle $\cite{imme-2},
where $A$ is the domain, $Cla$ and $Var$ are two unary relations,
$P$ and $N$ are two binary relations, and for any $i,j\in A,\ Cla(i),Var(j),P(i,j)$
and $N(i,j)$ means that $i$ is a clause, $j$ is a variable, variable
$j$ occurs positively in clause $i$ and variable $j$ occurs negatively
in clause $i$, respectively.

For example, the CNF formula
\[
\phi=((p_{1}\vee p_{3})\wedge(p_{2}\vee\neg p_{3})\wedge(p_{1}\vee p_{2}))
\]
 can be encoded as

\[
\begin{array}{l}
\mathcal{A}_{\phi}=\left\langle \{1,2,3\},Cla,VarP,N\right\rangle \\
Cla=\{1,2,3\}\\
Var=\{1,2,3\}\\
P=\{(1,1),(1,3),(2,2),(3,1),(3,2)\}\\
N=\{(2,3)\}
\end{array}
\]
The following formula
\[
\Phi=\forall X\exists Y\forall x\forall y\left(\begin{array}{cl}
 & \exists z\neg Y(z)\\
\wedge & (\neg Y(x)\rightarrow Cla(x))\\
\wedge & (Cla(x)\wedge Var(y)\wedge\neg Y(x)\wedge P(x,y)\rightarrow\neg X(y))\\
\wedge & (Cla(x)\wedge Var(y)\wedge\neg Y(x)\wedge N(x,y)\rightarrow X(y))
\end{array}\right)
\]
means that for any valuation $X$, where $X(i)$ holds iff variable
$i$ is true under this valuation, there exists a set of $Y$ of clauses,
such that any clause that doesn't occur in $Y$ is false under this
valuation. $\Phi$ is a SO-EHORN$^{r}$ formula and for any CNF formula
$\phi$, $\mathcal{A}_{\phi}\models\Phi$ iff $\phi$ is unsatisfiable.
\end{example}
\begin{theorem}
\label{thm:SO-EHORN^r=Pi_1}SO-EHORN$^{r}\equiv\Pi_{1}^{1}$
on all structures.\end{theorem}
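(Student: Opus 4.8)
The plan is to prove the two inclusions SO-EHORN$^{r}\subseteq\Pi_{1}^{1}$ and $\Pi_{1}^{1}\subseteq$ SO-EHORN$^{r}$ separately. For the first inclusion I would not argue syntactically, since the strictly alternating prefix $\forall P_{1}\exists R_{1}\cdots\forall P_{m}\exists R_{m}$ cannot be regrouped into a purely universal second-order prefix. Instead I would go through the data complexity. Fix a SO-EHORN$^{r}$ sentence $\Phi$ and a structure $\mathcal{A}$; once all the universal relations $P_{1},\dots,P_{m}$ are fixed, the bodies of the clauses are monotone in the existential relations $R_{1},\dots,R_{m}$ (the revised atoms $\forall\bar{y}R_{i}\bar{y}\bar{z}$ are themselves monotone increasing in $R_{i}$), so the rules deriving $R$-facts have a least fixed point computable in polynomial time, and satisfiability of the Horn part reduces to checking that no $\bot$-clause fires in this least model. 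This is the second-order lifting of the co-NP membership argument for QEHORN of a fixed prefix type \cite{fl90,fl93}, the dependency constraint (that $R_{i}$ may use only $P_{1},\dots,P_{i}$) being respected by the least-model semantics; the same bound was already obtained when SO-EHORN$^{r}$ was shown to capture co-NP on ordered structures in \cite{zhao-1}, and that analysis does not use the ordering. Hence the data complexity of SO-EHORN$^{r}$ is in co-NP, and since $\Pi_{1}^{1}$ captures co-NP on all finite structures (the dual of Fagin's theorem \cite{fagin}), every SO-EHORN$^{r}$ sentence is equivalent to a $\Pi_{1}^{1}$ sentence.

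For the converse inclusion I would exploit the results already established in Section 3, in particular the main theorem SO-HORN$^{r}\equiv$ FO(LFP). Let $\Phi$ be a $\Pi_{1}^{1}$ formula; write it as $\forall\bar{T}\,\vartheta$ where $\vartheta$ is first order over the expanded vocabulary $\sigma\cup\{\bar{T}\}$. Since $\vartheta$ is first order we have $\vartheta\in$ FO(LFP), so by SO-HORN$^{r}\equiv$ FO(LFP) there is a SO-HORN$^{r}$ formula $\exists\bar{R}\,\forall\bar{x}\bigwedge_{j}C_{j}$ over $\sigma\cup\{\bar{T}\}$ equivalent to $\vartheta$. In such a formula the relations $\bar{T}$ are treated as input symbols, so by the definition of SO-HORN$^{r}$ they occur only in the $\beta$-literals of the clauses (condition 2), i.e. never in a head and never among the $\alpha$-atoms, whereas each $C_{j}$ is Horn in $\bar{R}$ and obeys the revised condition 1'. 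I would then read $\bar{T}$ as universally quantified second-order relations and prepend $\forall\bar{T}$, obtaining $\forall\bar{T}\exists\bar{R}\,\forall\bar{x}\bigwedge_{j}C_{j}$. Because the $\bar{T}$-atoms sit in $\beta$-positions and each clause still contains at most one positive existential literal, this is by definition a SO-EHORN$^{r}$ formula (after padding the prefix $\forall\bar{T}\exists\bar{R}$ with vacuously quantified relations to fit the alternating shape $\forall P_{1}\exists R_{1}\cdots$); and for every $\mathcal{A}$ it holds iff $(\mathcal{A},\bar{T})\models\vartheta$ for all $\bar{T}$, i.e. iff $\mathcal{A}\models\Phi$.

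I expect the genuine work to be concentrated in the first inclusion, namely the careful verification that the alternating second-order prefix does not push the data complexity above co-NP; this is where the Horn least-model property and the dependency analysis of \cite{fl90,fl93} must be transferred from propositional QEHORN to the relational setting, including the revised atoms $\forall\bar{y}R_{i}\bar{y}\bar{z}$. The converse inclusion is essentially bookkeeping once SO-HORN$^{r}\equiv$ FO(LFP) is in hand: the only points needing attention are that the translation of $\vartheta$ keeps every $\bar{T}$-literal out of the heads and the $\alpha$-positions, and that re-quantifying $\bar{T}$ universally is sound. A more self-contained alternative, matching the normal-form phrasing, is to first put $\Phi$ into the $\Pi_{1}^{1}$ normal form $\forall\bar{T}\exists\bar{x}\forall\bar{y}\,\gamma$ dual to the $\Sigma_{1}^{1}$ normal form, observe that $\exists\bar{x}\forall\bar{y}\,\gamma$ is first order, and then apply the same translation; this isolates exactly the first-order kernel that has to be Hornified.
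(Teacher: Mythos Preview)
Your proposal is correct, and for the inclusion SO-EHORN$^{r}\subseteq\Pi_{1}^{1}$ you do exactly what the paper implicitly relies on: the co-NP data-complexity bound for SO-EHORN$^{r}$ was already established in \cite{zhao-1} (and that argument does not use the ordering), so by the dual of Fagin's theorem every SO-EHORN$^{r}$ sentence is $\Pi_{1}^{1}$-definable. The paper's written proof does not spell this direction out at all; you are more explicit here than the paper is.

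For the interesting inclusion $\Pi_{1}^{1}\subseteq$ SO-EHORN$^{r}$ you take a genuinely different route. You invoke the main result of Section~3, SO-HORN$^{r}\equiv$ FO(LFP), to Hornify the first-order kernel $\vartheta$ of $\forall\bar T\,\vartheta$, observe that the input symbols $\bar T$ land only in the $\beta$-positions of the resulting SO-HORN$^{r}$ formula, and then requantify $\forall\bar T$ in front. The paper instead gives a direct, self-contained normal-form trick: starting from the $\Pi_{1}^{1}$ normal form $\forall\bar P\,\exists\bar x\,\forall\bar y\,\phi'$ with $\phi'$ quantifier-free CNF, it replaces $\exists\bar x\,\forall\bar y\,\phi'$ by $\exists R\bigl(\exists\bar z\,\neg R\bar z\wedge\forall\bar x(\neg R\bar x\rightarrow\forall\bar y\,\phi')\bigr)$, which unfolds to $\forall\bar P\,\exists R\,\forall\bar x\forall\bar y\bigl((\forall\bar z R\bar z\rightarrow\bot)\wedge\bigwedge_k(\neg R\bar x\rightarrow D_k)\bigr)$, visibly SO-EHORN$^{r}$ since $\forall\bar z R\bar z$ is an admissible $\alpha$-atom and each clause has at most one positive $R$-literal. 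Your approach buys a clean conceptual link to Section~3 and avoids any ad hoc encoding; the paper's approach buys independence from the FO(LFP) machinery and, more importantly, immediately yields the sharper corollary that every $\Pi_{1}^{1}$ sentence is already equivalent to a $\Pi_{2}^{1}$-EHORN$^{r}$ sentence (a single $\forall\exists$ second-order alternation), which your detour through SO-HORN$^{r}$ does not give without further work, since the SO-HORN$^{r}$ translation of $\vartheta$ may introduce many existential relations but still keeps a single $\forall\bar T\,\exists\bar R$ block---so in fact you do get $\Pi_{2}^{1}$-EHORN$^{r}$ as well, just with a larger $\bar R$.
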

\begin{proof}
We give here two different proofs of this theorem.

\#1. It is well known that every $\Sigma_{1}^{1}$ formula is equivalent
to a formula of skolem normal form\cite{gra91}
\begin{equation}
\exists P_{1}\cdots\exists P_{n}\forall x_{1}\cdots\forall x_{m}\exists y_{1}\cdots\exists y_{r}\phi\label{eq:SOE_norm}
\end{equation}
where $\phi$ is a quantifier-free CNF formula. Let $P$ be a new
$r$-ary second-order relation symbol and $\bar{z}=z_{1},\cdots,z_{r}$
be new variables that don't occur in \eqref{eq:SOE_norm}. It is easily
seen that $\exists y_{1}\cdots\exists y_{r}\phi$ is equivalent to
\[
\exists P(\exists\bar{z}P\bar{z}\wedge\forall\bar{y}(P\bar{y}\rightarrow\phi))
\]
Then \eqref{eq:SOE_norm} is equivalent to
\[
\exists P_{1}\cdots\exists P_{n}\forall\bar{x}\exists P\forall\bar{y}(\exists\bar{z}P\bar{z}\wedge(P\bar{y}\rightarrow\phi))
\]
Repeated application of Lemma \ref{lemma:UviFOafterExisSO} shows that
the above formula is equivalent to
\begin{equation}
\exists P_{1}\cdots\exists P_{n}\exists P'\forall\bar{x}\forall\bar{y}(\exists\bar{z}P'\bar{x}\bar{z}\wedge(P'\bar{x}\bar{y}\rightarrow\phi))\label{eq:SO_nom2}
\end{equation}
where $P'$ is an $m+r$-ary second-order relation symbol. The left
is the same as that of Theorem 7 in \cite{zhao-1}, just replace the
formula $\Phi$ by the formula \eqref{eq:SO_nom2} above.

\#2. By \eqref{eq:SOE_norm} we know that every $\Pi_{1}^{1}$ formula
is equivalent to a formula of the form
\begin{equation}
\forall P_{1}\cdots\forall P_{n}\exists x_{1}\cdots\exists x_{m}\forall y_{1}\cdots\forall y_{r}\phi'\label{eq:SO_nom3}
\end{equation}
where $\phi'$ is a quantifier-free formula. With no loss of generality,
suppose $\phi'$ is in CNF form. Let $R$ be a new $m$-ary second-order
relation symbol and $\bar{z}=z_{1},\cdots,z_{m}$ be new variables
that don't occur in \eqref{eq:SO_nom3}, then $\exists\bar{x}\forall\bar{y}\phi'$
is equivalent to $\exists R(\exists\bar{z}\neg R\bar{z}\wedge\forall\bar{x}(\neg R\bar{x}\rightarrow\forall\bar{y}\phi'))$.
So \eqref{eq:SO_nom3} is equivalent to
\begin{equation}
\forall P_{1}\cdots\forall P_{n}\exists R\forall\bar{x}\forall\bar{y}(\exists\bar{z}\neg R\bar{z}\wedge(\neg R\bar{x}\rightarrow\phi'))\label{eq:SO_nom4}
\end{equation}
It is easily seen that \eqref{eq:SO_nom4} is equivalent to a SO-EHORN$^{r}$
formula.\end{proof}
\begin{corollary}
SO-EHORN$^{r}$ captures co-NP on all structures.
\end{corollary}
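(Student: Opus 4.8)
The plan is to read the corollary off Theorem~\ref{thm:SO-EHORN^r=Pi_1} together with the fact recalled at the opening of this section that $\Pi_{1}^{1}$ captures co-NP on all finite structures (this being the complement of Fagin's theorem that $\Sigma_{1}^{1}$ captures NP). Since capturing a complexity class splits into two requirements---an upper bound on the data complexity and the expressibility of every problem in the class---I would establish each separately by transporting it across the expressive equivalence SO-EHORN$^{r}\equiv\Pi_{1}^{1}$.

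For the data-complexity half I would fix an arbitrary closed SO-EHORN$^{r}$ formula $\phi$ over a vocabulary $\tau$. By Theorem~\ref{thm:SO-EHORN^r=Pi_1} there is a $\Pi_{1}^{1}$ sentence $\psi$ with $Mod(\phi)=Mod(\psi)$. Because model checking for a fixed $\Pi_{1}^{1}$ sentence lies in co-NP and $\phi,\psi$ have exactly the same finite models, the membership problem for $Mod(\phi)$ is in co-NP as well.

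For the expressibility half I would start from an arbitrary $K\subseteq STRUC(\tau)$ whose membership problem is in co-NP. Since $\Pi_{1}^{1}$ captures co-NP on all structures, there is a $\Pi_{1}^{1}$ sentence $\psi$ with $K=Mod(\psi)$; applying the $\Pi_{1}^{1}\subseteq$ SO-EHORN$^{r}$ direction of Theorem~\ref{thm:SO-EHORN^r=Pi_1} then yields a SO-EHORN$^{r}$ sentence $\phi$ with $Mod(\phi)=Mod(\psi)=K$. Together the two halves give that SO-EHORN$^{r}$ captures co-NP on all finite structures.

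I do not expect a genuine obstacle here, since all the substance was already carried by Theorem~\ref{thm:SO-EHORN^r=Pi_1} and the corollary is a pure transfer argument. The one point to watch is that ``captures'' must be read in its per-formula sense: the equivalence furnished by the theorem is an effective, formula-by-formula translation, so for the data-complexity half the sentence $\psi$ is a genuinely fixed object and the constant implicit in its co-NP model-checking bound is allowed to depend on $\phi$ but not on the input structure. This is precisely what the constructive proof of the theorem provides, so no new construction is needed.
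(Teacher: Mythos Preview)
Your proposal is correct and follows exactly the intended route: the paper states this corollary without proof, treating it as immediate from Theorem~\ref{thm:SO-EHORN^r=Pi_1} together with the fact (recalled at the start of the section) that $\Pi_{1}^{1}$ captures co-NP on all structures. Your explicit unpacking of both halves of the ``captures'' definition is more detailed than what the paper writes, but the argument is the same.
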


\begin{corollary}
SO-EHORN$^{r}$ collapses to $\Pi_{2}^{1}$-EHORN$^{r}$ on all structures.\end{corollary}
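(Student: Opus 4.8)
The plan is to read the collapse off directly from the second proof of Theorem~\ref{thm:SO-EHORN^r=Pi_1}. By definition, $\Pi_{2}^{1}$-EHORN$^{r}$ is the restriction of SO-EHORN$^{r}$ in which the second-order prefix is allowed only a single block of universal quantifiers followed by a single block of existential quantifiers, i.e. of the form $\forall\bar{P}\exists\bar{R}$ rather than the fully alternating prefix $\forall P_{1}\exists R_{1}\cdots\forall P_{m}\exists R_{m}$. Since this is a syntactic restriction, the containment $\Pi_{2}^{1}\text{-EHORN}^{r}\subseteq\text{SO-EHORN}^{r}$ holds immediately, so it suffices to establish the reverse inclusion $\text{SO-EHORN}^{r}\subseteq\Pi_{2}^{1}\text{-EHORN}^{r}$.

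First I would invoke Theorem~\ref{thm:SO-EHORN^r=Pi_1}, which gives $\text{SO-EHORN}^{r}\equiv\Pi_{1}^{1}$, so that every SO-EHORN$^{r}$ formula is equivalent to some $\Pi_{1}^{1}$ formula. The key observation is that the construction in proof \#2 of that theorem already produces a witness of exactly the desired shape: starting from the normal form \eqref{eq:SO_nom3} of a $\Pi_{1}^{1}$ formula, namely $\forall\bar{P}\exists\bar{x}\forall\bar{y}\phi'$, the construction rewrites it as \eqref{eq:SO_nom4},
\[
\forall P_{1}\cdots\forall P_{n}\exists R\forall\bar{x}\forall\bar{y}(\exists\bar{z}\neg R\bar{z}\wedge(\neg R\bar{x}\rightarrow\phi')),
\]
whose second-order prefix is precisely $\forall\bar{P}\exists R$, a $\Pi_{2}^{1}$ prefix. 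I would then note that this formula genuinely lies in SO-EHORN$^{r}$ — its matrix is a conjunction of extended revised Horn clauses with respect to the single existentially quantified symbol $R$ — so it is in fact a $\Pi_{2}^{1}$-EHORN$^{r}$ formula. Chaining these facts gives $\text{SO-EHORN}^{r}\equiv\Pi_{1}^{1}\subseteq\Pi_{2}^{1}\text{-EHORN}^{r}\subseteq\text{SO-EHORN}^{r}$, forcing all of them to coincide and yielding the claimed collapse.

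The only point that needs care — and the nearest thing to an obstacle — is confirming that the rewriting never reintroduces an alternation back to a universal second-order quantifier, and that the existential conjunct $\exists\bar{z}\neg R\bar{z}$ together with the implication $\neg R\bar{x}\rightarrow\phi'$ conform to the extended revised Horn format once $\phi'$ is put in CNF and the negated atoms over the existential symbol $R$ are accounted for. This is exactly the bookkeeping already carried out in proof \#2, so no new argument is required beyond recording that the produced prefix is of type $\Pi_{2}^{1}$.
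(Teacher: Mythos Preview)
Your proposal is correct and follows essentially the same route as the paper: both argue that proof~\#2 of Theorem~\ref{thm:SO-EHORN^r=Pi_1} already produces, from any $\Pi_{1}^{1}$ formula, an equivalent formula~\eqref{eq:SO_nom4} whose second-order prefix is $\forall\bar{P}\exists R$ and whose matrix is in extended revised Horn form, yielding SO-EHORN$^{r}\subseteq\Pi_{2}^{1}$-EHORN$^{r}$. Your write-up simply makes explicit the trivial reverse containment and the verification that the matrix has the required shape, points the paper leaves implicit.
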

\begin{proof}
In the second proof of Theorem \eqref{thm:SO-EHORN^r=Pi_1},
we have proved more, namely that every $\Pi_{1}^{1}$ formula is equivalent
to a $\Pi_{2}^{1}$-EHORN$^{r}$ formula which implies that SO-EHORN$^{r}\subseteq\Pi_{2}^{1}$-EHORN$^{r}$.
\end{proof}

\section{Conclusion}

In this paper, we showed that, on all finite structures, SO-HORN$^{r}$,
SO-HORN$^{*r}$, FO(LFP) coincide with each other and SO-HORN$^{*}$
is a proper sublogic of SO-HORN$^{r}$. We introduced the notions
of DATALOG$^{*}$ program and DATALOG$^{r}$ program and their stratified
versions, S-DATALOG$^{*}$ program and S-DATALOG$^{r}$ program. We
proved that DATALOG$^{r}$ and S-DATALOG$^{r}$ are equivalent and
DATALOG$^{*}$ is a proper sublogic of DATALOG$^{r}$.
We also showed that SO-EHORN$^{r}$ which is an extended version of
SO-HORN captures co-NP on all structures. We prove this this result
in two versions. In the first version we improved the proof in \cite{zhao-1}
using the skolem normal forms of $\Sigma_{1}^{1}$ formulas. In the
second version, we gave a straightforward proof using the normal form.

%-------------------------------------------------------------------------%
%    2.7 References
%-------------------------------------------------------------------------%
\vspace{2ex}

\end{document}